\newtheorem{thm}{Theorem}
\newtheorem{lem}[thm]{Lemma}
\newtheorem{prp}[thm]{Proposition}
\def\vol{\mbox{vol}}
\def\volesti{{\tt VolEsti}}
\def\RR{{\mathbb R}}
\title{Practical volume computation of structured convex bodies, and an application to modeling portfolio dependencies and financial crises\footnote{The views expressed are those of the authors and do not necessarily reflect official positions of the European Commission.}}
\titlerunning{Volume computation of structured convex bodies}
\author{Ludovic Cal\`es\footnote{Cal\`es acknowledges the financial support of the European Commission through its Proof-of-Concept program.}}{European Commission, Joint Research Centre, Ispra, Italy}{ludovic.cales@ec.europa.eu}{}{}{}
\author{Apostolos Chalkis}{Department of Informatics \& Telecommunications\\
National \& Kapodistrian University of Athens, Greece}{achalkis@di.uoa.gr}{}{}{}
\author{Ioannis Z.~Emiris\footnote{Emiris is partially supported by the European Union’s Horizon 2020 research and innovation programme under grant agreement No 734242 (Project LAMBDA).}}{Department of Informatics \& Telecommunications\\
National \& Kapodistrian University of Athens, Greece, and \\ ATHENA Research \& Innovation Center, Greece}{emiris@di.uoa.gr}{}{}{}
\author{Vissarion Fisikopoulos}{Oracle, Greece}{vissarion.fysikopoulos@oracle.com}{}{}{}
\authorrunning{L.~Cal\`es et al.} 
\subjclass{Design and analysis of algorithms: \\Computational geometry, Random walks and Markov chains}
\keywords{Polytope volume, convex body, simplex, sampling, financial portfolio}
\begin{document}

\maketitle

\begin{abstract}
We examine volume computation of general-dimensional polytopes and more general convex bodies, defined as the intersection of a simplex by a family of parallel hyperplanes, and another family of parallel hyperplanes or a family of concentric ellipsoids. Such convex bodies appear in modeling and predicting financial crises.
The impact of crises on the economy (labor, income, etc.) makes its detection of prime interest for the public in general and for policy makers in particular.  Certain features of 
dependencies in the markets clearly identify times of turmoil.
We describe the relationship between asset characteristics by means of a copula; 
each characteristic is either a linear or quadratic form of the portfolio components, hence the copula can be constructed by computing volumes of convex bodies. 

We design and implement practical algorithms in the exact and approximate setting, we experimentally juxtapose them and study the tradeoff of exactness and accuracy for speed.  We analyze the following methods in order of increasing generality:
rejection sampling relying on uniformly sampling the simplex, which is the fastest approach, but inaccurate for small volumes;
exact formulae based on the computation of integrals of probability distribution functions, which are the method of choice for intersections with a single hyperplane;
an optimized Lawrence sign decomposition method, since the polytopes at hand are shown to be simple with additional structure;
Markov chain Monte Carlo algorithms using random walks based on the hit-and-run paradigm generalized to nonlinear convex bodies and relying on new methods for computing a ball enclosed in the given body, such as a second-order cone program;
the latter is experimentally extended to non-convex bodies with very encouraging results.
Our C++ software, based on CGAL and Eigen and available on {\tt github}, is shown to be very effective in up to 100 dimensions.
Our results offer novel, effective means of computing portfolio dependencies and an indicator of financial crises, which is shown to correctly identify past crises.
\end{abstract}

\section{Introduction} 

\subsection{Financial context and motivation}

Modern finance has been pioneered by Markowitz who set a framework to study choice in portfolio allocation under uncertainty, see \cite{M52}.\footnote{for which he was awarded the Nobel Prize in economics in 1990.} 
Within this framework, Markowitz characterized portfolios by their return and their risk which is defined as the variance of the portfolios' returns.
And an investor would build a portfolio that will maximize its expected return for a chosen level of risk.\footnote{In the same way, by choosing a level of expected return, an investor can construct a portfolio which minimizes the risk.} 
It has since be common for asset managers to optimize their portfolio within this framework.
And it has led a large part of the empirical finance research to focus on the so-called efficient frontier which is defined as the set of portfolios presenting the lowest risk for a given expected return.
Figure \ref{fig:Illustration_Intro} (left panel) presents such an efficient frontier. The region on the left of the efficient frontier represent the portfolios domain.

Interestingly, despite the fact that this framework considers the whole set of portfolios, no attention has been given to the distribution of portfolios. Figure \ref{fig:Illustration_Intro} (middle panel) presents such distribution\footnote{10.000.000 portfolios have been sampled as presented later in Section \ref{Ssimplex}.}. 
When comparing the contour of the empirical portfolios distribution\footnote{Region over which at least 1 random portfolio lies.} and the portfolio domain bounded by the efficient frontier in Figure \ref{fig:Illustration_Intro} (right panel), we observe that the density of portfolios along the efficient frontier is dim and that most of the portfolios are located in a small region of the portfolios domain. 

\begin{figure}[h!]
	\centering
		\includegraphics[width=\textwidth]{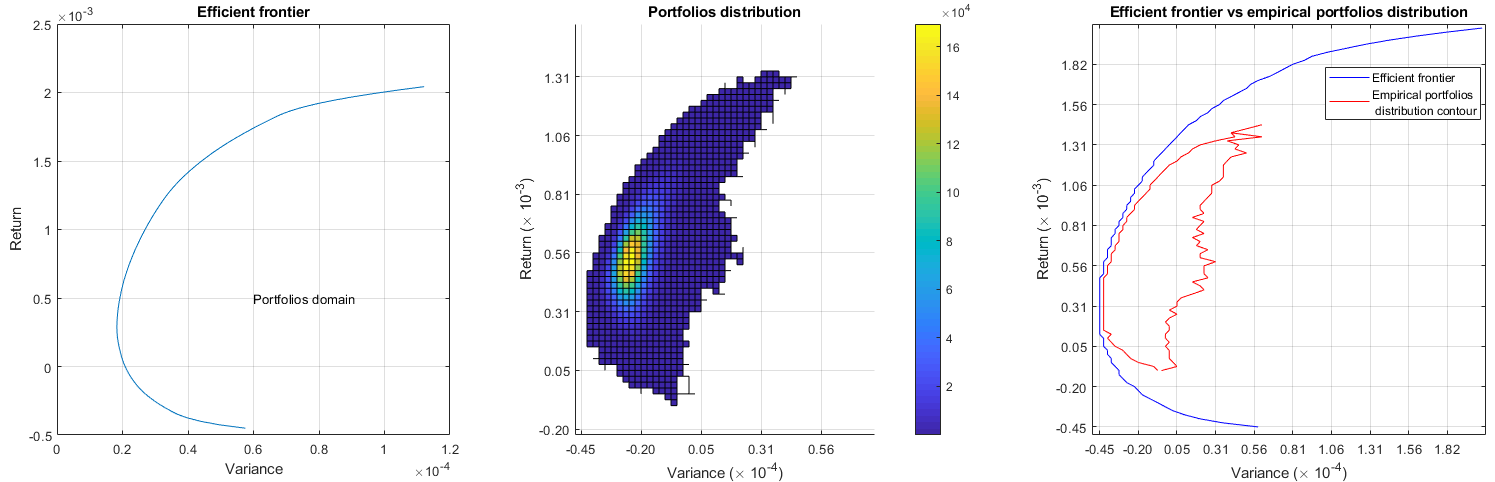}
	\caption{(left) Efficient frontier, (middle) Empirical portfolio distribution by portfolios' return and variance, (right) Efficient frontier in blue and contour of the empirical portfolio distribution in red. The market considered is made of the 19 sectoral indices of the DJSTOXX 600 Europe. The data is from October 16, 2017 to January 10, 2018.}
	\label{fig:Illustration_Intro}
\end{figure}

We also know from the financial literature that financial markets exhibit 3 types of behavior. 
In normal times, stocks are characterized by slightly positive returns and a moderate volatility, in up-market times (typically bubbles) by high returns and low volatility, and during financial crises by strongly negative returns and high volatility, see e.g. \cite{BGP12} for details.
So, following Markowitz' framework, in normal and up-market times, the stocks and portfolios with the lowest volatility should present the lowest returns, whereas during crises those with the lowest volatility should present the highest returns.
These features\footnote{also called ``stylized facts" in the financial literature} motivate us to describe the time-varying dependency between portfolios' returns and volatility.

However this dependency is difficult to capture from the usual mean-variance representation, as in Figure \ref{fig:Illustration_Intro} (middle panel), so we will rely on the copula representation of the portfolios distribution.
A copula is a bivariate probability distribution for which the marginal probability distribution of each variable is uniform.
As we following Markowitz' framework, the variables considered are the portfolios' return and variance.
Figure \ref{fig:Illustration_Copula} illustrates such a copula and shows a positive dependency between portfolios returns and variances.
Each line and column sum to 1\% of the portfolios. 

\begin{figure}[h!]
	\centering
		\includegraphics[width=0.5\textwidth]{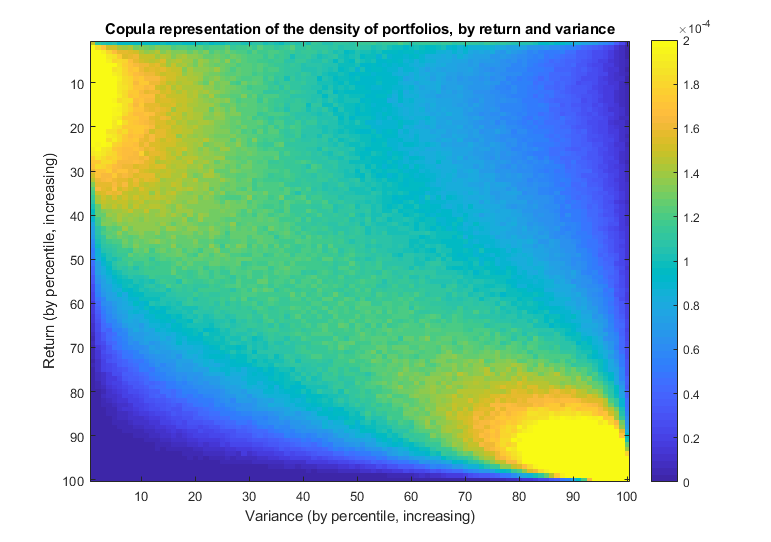}
	\caption{Copula representation of the portfolios distribution, by return and variance. The market considered is made of the 19 sectoral indices of the DJSTOXX 600 Europe. The data is from October 16, 2017 to January 10, 2018.}
	\label{fig:Illustration_Copula}
\end{figure}

The methods introduced here can be used to study other dependencies such as the momentum effect~\cite{JT93} which is implied by the dependencies of asset returns with their past returns.

The dependencies mentioned here are important because
\begin{itemize}
	\item through the return/volatility dependency, the detection of crisis raises policy makers awareness and allows them to act accordingly with potentially large implications in citizens' life (employment, wages, pensions, etc).
	\item the momentum, if persistent, questions the efficiency of financial markets, a strong assumption which still cannot be proven wrong.
\end{itemize}

Interestingly, the copulas can be computed over a single period of time making the information available as early as the sample allows. The copula for the momentum dependency can be computed over very short periods (even intra-daily). The copula for the return/volatility dependency requires the estimation of the stock returns variance-covariance matrix which has to be estimated over a sufficiently large period of time to be reliable thus delaying the detection of crisis.\footnote{Methods exist to estimate the stock returns variance-covariance matrix over short periods, see e.g. the range-based estimation method~\cite{BGP12}. However they usually requires high-frequency data and are not widely used. These methods are beyond the scope of this paper.}

In the general case, the framework to describe the dependencies is as follows.
First, as the set of portfolios, we consider the canonical $d$-dimensional simplex $\Delta^d\subset\RR^{d+1}$ where each point represents a portfolio and $d+1$ is the number of assets. 
The vertices represent portfolios composed entirely of a single asset.
The portfolio weights, i.e.\ fraction of investment to a specific asset, are non-negative and sum to 1. 
This is the most common investment set in practice today, as portfolio managers are typically forbidden from short-selling or leveraging. 
Second, considering some asset characteristic $ac$ quantified by $C\in\RR^{d+1}$, we define a corresponding quantity $f_{ac}(\omega, C)$ for any portfolio $\omega \in \Delta^d$.
For instance, considering the vector of asset returns $R\in\RR^{d+1}$, $\omega$ has the return $f_{ret}(\omega, R) = R^T \omega$.
Then, we define the cross-sectional score of a given portfolio $\omega^*$ as
$$
\rho_{ac} = \frac{ \vol(\Delta^*)}{\vol(\Delta^d)} ,\; \mbox{ where } 
\Delta^*=\{ \omega \in \Delta^d \, :\, f(\omega, C) \leq f(\omega^*, C)\} ,
$$ 
which corresponds to the share of portfolios with a return lower or equal to $R^*=R^T \omega^*$. 
This score corresponds to the cumulative distribution function of $f_{ac}(\omega,C)$ where the portfolios are uniformly distributed over the simplex.
In the following, we consider the cases where $f_{ac}$ is a linear combination or a quadratic form of $C$. Finally, the relationship between two asset characteristics $ac_1$ and $ac_2$ is presented in the form of a {\em copula} whose marginals are $\rho_{ac_1}$, $\rho_{ac_2}$. In our applications, the asset characteristics considered are the assets' returns and variances, and their values correspond to a linear combination of the returns and a quadratic form of the returns, respectively.

A copula is computed by slicing a simplex, i.e. the set of portfolios, along the asset characteristics. Thus, these questions are formulated in terms of convex bodies defined by intersecting simplices on one hand by a family of parallel hyperplanes and, on the other hand, by another family of parallel hyperplanes in the linear case or a family of concentric ellipsoids in the quadratic case. Furthermore, the latter case yields non-convex bodies between two ellipsoids.  

\subsection{Previous work}

The cross-sectional score of portfolio returns has been introduced in \cite{P05} and it is estimated by means of a quasi-Monte Carlo method. 
The applications have been limited in terms of dimensions: the 30 DAX components and the 24 MSCI Netherlands components in \cite{P05}, the 35 components of the IBEX in \cite{PST04}. This score has also been proposed in \cite{BCG11}, for the set of long/short equally weighted zero-dollar portfolios and whose estimation relying on combinatorics and statistics is computationally limited to around 20 dimensions, and in \cite{BH11} where the focus was not on a precise score.

Given that volume computation of polytopes is $\#$P-hard for both V- and H-representations \cite{DyerFr88} and no poly-time algorithm can achieve better than exponential error~\cite{Elekes86}, the problem is not expected to admit of an efficient deterministic algorithm in general dimension. 
Developing algorithms for volume computation has received a lot of attention in the exact setting~\cite{Bueler00}.
In the approximate setting, following the breakthrough polynomial-time algorithm by random walks~\cite{DyerFrKa91}, several algorithmic improvements ensued.
The current best theoretical bounds are in~\cite{Lee17b} and for polytope sampling in~\cite{Lee17a}.
Interestingly, only two pieces of software offer practical algorithms in high dimension:
\volesti, a public-domain C++ implementation that scales to a few hundred dimensions~\cite{VolEsti}, based on the Hit-and-Run paradigm~\cite{Lovasz99}, and the Matlab implementation of~\cite{CousinsV13}, which treats hyperplanes as an ellipsoid, and seems competitive to \volesti\ in very high dimensions.
Sampling from non-convex bodies appears in experimental works, with very few methods offering theoretical guarantees, e.g.\ in star shaped bodies \cite{Chandrasekaran09} or, more recently, in~\cite{Abbasi17}. 

\subsection{Our contribution}

We design and implement the following different approaches for volume computation:
Efficient sampling from the simplex and using rejection to approximate the target volume, which is fast but inaccurate for small volumes.
Exact formulae of integrals of appropriate probability distribution functions, which are implemented for the case of a single hyperplane.
Optimizing the use of Lawrence's sign decomposition method, since the polytopes at hand are shown to be simple with extra structure; a major issue here is numerical instability. 
Extending state-of-the-art random walks, based on the hit-and-run paradigm, to convex bodies defined as the intersection of linear halfspaces and ellipsoids.
The latter is experimentally generalized to non-convex bodies defined by two ellipsoids with same quadratic form, and accurate approximations are obtained under certain mild conditions.

Our randomized algorithms for volume approximation extend \volesti , where the main problem to address is to compute the maximum inscribed ball of the convex body $P$ a.k.a.\ Chebychev ball.
This reduces to a linear program when $P$ is a polytope.
For a convex body defined by intersecting a polytope with $k$ balls, the question becomes a second-order cone program (SOCP) with $k$ cones. When interchanging input balls with ellipsoids, the SOCP yields a sufficiently good approximation of the Chebychev ball.

Our implementations are in C++, lie in the public domain ({\tt github}),
are based on CGAL, rely on Eigen for linear algebra, on {\tt Boost} for random number generators, and experiment with two SOCP solvers for initializing random walks.  
Our software tools are general and of independent interest.
They are applied to allow us to extend the computation of a portfolio score to up to~100 dimensions, thus doubling the size of assets studied in financial research.
We thus provide a new description of asset characteristics dependencies.
Our methods allow us to propose and to effectively compute a new indicator of financial crises, which is shown to correctly identify all past crises with which we experimented.
More importantly, it allows us to establish that periods of momentum nearly never overlap with the crisis events, which is a new result in finance.

The rest of the paper is organized as follows. In Section \ref{finmod} we describe the convex bodies that intersect the canonical simplex and arise from the financial modeling we work on. Moreover we overview methods for representing and uniform sampling from the simplex and give theoretical guarantees for the sampling - rejection method accuracy.
Section \ref{Shplanes} considers volumes defined as the intersection of a simplex and one hyperplane or more hyperplanes, the latter being organized in at most two families of parallel hyperplanes.
Section \ref{Sellipse} studies convex and non-convex bodies defined as the intersection of a simplex and an ellipsoid, for which random walk methods are developed.
The implementations are discussed in Section \ref{Sexper}, along with experiments that show the validity of our approach in answering open questions in finance.
We conclude with current work and open questions.
Figures and tables that do not fit are given in the Appendix.

\section{Convex bodies and Financial modeling}\label{finmod}

We analyze real data consisting of regular interval (e.g.\ daily) returns of stocks such as the constituents of the Dow Jones Stoxx 600 Europe\texttrademark  (DJ600).
These are points in real space of dimension $d=600$, respectively: $r_i=(r_{i,1},\dots,r_{i,d}) \in \mathbb{R}^{d}$, $i\ge 1$.

We apply the methodology to a subset of assets drawn from the DJ~600 constituents\footnote{The data used is from Bloomberg\texttrademark.. It is daily and ranges from 01/01/1990 to 31/11/2017.}. 
Since not all stocks are tracked for the full period of time, we select the 100 assets with the longest history in the index\footnote{This implies a survivor bias, but we use it to assess the effectiveness of the methodology. One would wish to keep 600 constituents, replacing the exiting stocks with the entering ones along the sample.}, and juxtapose:
\begin{itemize}
	\item stock returns and stock returns covariance matrix over the same period to detect crises,
	\item stock returns and past stock returns to observe any momentum effect,
\end{itemize}

In financial applications, one considers compound returns over periods of $k$ observations, where typically $k=20$ or $k=60$; the latter corresponds to roughly 3 months when observations are daily.
Compound returns are obtained using $k$ observations starting at the $i$-th one where the $j$-th coordinate corresponds to asset $j$ and the component j of the new vector equals:
$$
(1+r_{i,j})(1+r_{i+1,j})\cdots (1+r_{i+k-1,j}) -1, \quad j=1,\dots, d.
$$
This defines the normal vector to a family of parallel hyperplanes, whose equations are fully defined by selecting appropriate constants. 
The second family of parallel hyperplanes is defined similarly by using an adjacent period of $k$ observations.

The covariance matrix of the stock returns is computed using the shrinkage estimator of \cite{LW04},\footnote{Matlab code on \url{http://www.econ.uzh.ch/en/people/faculty/wolf/publications.html}.} as it provides a robust estimate even when the sample size is short with respect to the number of assets.
A covariance matrix $C$ defines a family of ellipsoids centered at the origin $0\in\mathbb{R}^{d}$ whose equations $x^TCx=c$ are fully specified by selecting appropriate constants $c$.

To compute the copulas, we determine constants defining hyperplanes and ellispoids so that the volume between two consecutive such objects is 1\% of the simplex volume. 
The former are determined by bisection using the Varsi's exact formula.
For ellipsoids $E(x)=c_i$, we look for the $c_i$'s by sampling the simplex, then evaluating $E(x)$ at each point. The values are sorted and the $c_i$ selected so as to define intervals containing 1\% of the values. Two consecutive ellipsoids intersecting the simplex and the family of parallel hyperplanes define a non-convex body for which we practically extend \volesti \ algorithm.

The volume between two consecutive hyperplanes and two consecutive ellipsoids defines the density of portfolios whose returns and volatilities lie between the specified constants. We thus get a copula representing the distribution of the portfolios with respect to the portfolios returns and volatilities. 
Fig.~\ref{fig:Returns_variance_relationship} illustrates such copulae, and shows the different relationship between returns and volatility in good (left, dot-com bubble) and bad (right, bubble burst) times.\footnote{We consider 100 components of DJ~600 with longest history, over 60 days ending at the given date.}

The main problem is to compute all the volumes that arise from the intersection of the two families with the unit simplex. We totally have to handle three types of full dimensional bodies and thus we develop or use existing methods for three different problems. The first is to compute the volume of the polytope defined by the intersection of the unit simplex with four hyperplanes which are pairwise parallel. The second arises when an ellispoid intersects with the unit simplex and a family of parallel hyperplanes. The third is to compute the volume of a non-convex body defined by the intersection of two concentric ellipsoids with a simplex and a family of parallel hyperplanes.

We develop and use four methods in total. The first (M1) is an exact formula for the volume defined by the intersection of simplex with a hyperplane. The second (M2 or s/r) is to sample the unit simplex and approximate all the volumes directly. The third method (M3) is the optimized Lawrence formula for simple polytopes and is used for the first problem. The fourth method (M4) is the generalization of the \volesti \ algorithm to non-linear and non-convex bodies.

\subsection{Simplex representation and sampling}\label{Ssimplex}

This subsection sets the notation, surveys methods for uniform sampling of the simplex, and discusses their efficient implementation.

The $d$-dimensional simplex $\Delta^{d}\subset\RR^{d+1}$ may be represented by barycentric coordinates $\lambda=(\lambda_0,\dots,\lambda_d)$ s.t.\
$\sum_{i=0}^{d} \lambda_i=1, \, \lambda_i \ge 0$. The points are
$\sum_{i=0}^{d} \lambda_i v_i$, where $v_0, \dots, v_d \in \mathbb{R}^{d}$ are affinely independent.
It is convenient to use a full-dimensional simplex, by switching to Cartesian coordinates $x =(x_1, \dots, x_{d})$:
$$
m_{bc}: \mathbb{R}^{d+1} \mapsto \mathbb{R}^{d} : \lambda \rightarrow x = M (\lambda_1,\dots,\lambda_n)^T + v_0,\; \mbox{ where } M = [ v_1 -v_0\, \cdots \ v_{d} -v_0 ] ,
$$
is a $d\times d$ invertible matrix. The inverse transform is:
\begin{equation}\label{eq:cb}
m_{cb}: \mathbb{R}^{d} \mapsto \mathbb{R}^{d+1} : x \rightarrow \lambda = \left[\begin{array}{c} - 1_{d}^T \\ I_d \end{array}\right] M^{-1} (x-v_0) + \left[\begin{array}{c} 1 \\ 0_{d} \end{array}\right] ,
                \end{equation}
where $0_{d}, 1_{d}$ are $d$-dimensional column vectors of 1's and 0's, respectively, and $I_{d}$ is the $d$-dimensional identity matrix.

A number of algorithms exist for sampling, where some have been rediscovered, while others contain errors; see the survey~\cite{SmTr}.
Let us start with a unit simplex in Cartesian coordinates.
A $O(d \log d)$ algorithm is the following \cite{David81,D86,RbKr07}:
Generate $d$ distinct integers uniformly in $\{1,\dots ,K-1\}$,
where $K$ is the largest representable integer.
Sort them as follows: $x_0=0<x_1<\cdots <x_{d+1}=K$.
Now ${(x_i-x_{i-1})}/K$, $i=1,\dots ,d$, defines a uniform point.
Assuming we possess a perfect hash-function, the choice of distinct integers takes $O(d)$. 
For $d>60$ we implement a variant of Bloom filter to guarantee distinctness.

A linear-time algorithm is given in \cite{RbMel98}, which is is generally the algorithm of choice, although it is slower for $d<80$:
\if 0
Generate $d+1$ independent unit-exponential random variables $y_i$ and normalize by their sum $s$: $y_i/s,\, i=1,\dots,d$ is a uniform point.
For this, we uniformly sample real $x_i\in (0,1)$, and set $y_i=- \log x_i$.
In practice, this algorithm is slower in low dimensions.
\fi
\begin{enumerate}
\item
Generate $d+1$ independent unit-exponential random variables $y_i$ by uniformly sampling real value $x_i\in (0,1)$ and setting $y_i=- \log x_i$.
\item
Normalize the $y_i$'s by their sum $s=\sum_{i=0}^d y_i$, thus obtaining a uniformly distributed point $(y_0/s,\dots,y_d/s)$ on the $d$-dimensional canonical simplex lying in $\RR^{d+1}$.
\item
Project this point along the $x_0$-axis to $(y_1/s,\dots, y_d/s)$, which is a uniform point in the full-dimensional unit simplex.
\end{enumerate}

To sample an arbitrary simplex, we can map sampled points from the unit simplex by transformation~(\ref{eq:cb}), which preserves uniformity. 
Due to applying the transformation, the complexity is $O(d^2)$ to generate a uniform point.
The same complexity, though slower in practice, is achieved in \cite{grimme}.

Sampling could be used in order to approximate all the volumes that arising when two families of parallel hyperplanes intersect with a simplex. One can sample the simplex and count the percentage of points in the region of interest. The complexity is 
$O(kd)$ to generate $k$ points.
In the case of a family of $\ell$ parallel hyperplanes, all sample points are evaluated at the hyperplane linear polynomials in time $O(kd)$.
Given the $\ell$ constant terms characterizing the hyperplanes, for each point we perform a binary search so as to decide in which layer it lies.
Hence the total complexity is $O(k\log \ell)$, which is dominated since $\ell\le 100$ typically.
Given a family of $\ell$ ellipsoids with same quadratic form intersecting a simplex, the method requires $O(kd^2)$ to evaluate all sample points and $O(k\log \ell)$ to assign them to layers.

\subsection{Sampling - Rejection accuracy}

In this subsection we obtain how we could guarantee a bounded error for the sampling - rejection method with high probability. Let $B$ be a convex or non convex full dimensional body in dimension $d$ and let $S$ be an enclosing simplex such that $B\subseteq S$ and let $p=\dfrac{Vol(B)}{Vol(S)}$. Then if we uniformly sample a point from $S$ it lies in $B$ with probability $p$. So if we sample $N$ points from $S$ the random variable $X$ which gives the number, $k$, of points that lie in $B$ follows the binomial distribution. So, $P(X=k)={{N}\choose{k}} p^k(1-p)^{N-k}$. Moreover if we sample $N$ points and reject,
\begin{equation}\label{poserr}
\sum_{k=n_1}^{n_2} P(X=k),\quad n_1=Np(1-e)\text{,  }n_2=Np(1+e)
\end{equation}
is the probability that the sampling - rejection method error is at most $e$. From Poison Limit Theorem we know that if $\lim_{N\rightarrow \infty}Np=\lambda$ is a constant independent of $N$ then for any fixed $k$, $lim_{N\rightarrow\infty}P(X=k)=e^{-\lambda}\dfrac{\lambda^k}{k!}$. If we set $N=m_1\cdot 10^x$, where $x=m_2+\lceil -\log_{10}p\rceil$, we notice that $Np\approx m_1\cdot 10^{m_2}$. So we can use \textit{Poisson limit theorem} to approximate the random variable $X$ as we usually set large enough values for $m_1$ and $m_2$. Then from Poison cumulative distribution function we can approximate probability \ref{poserr}.
\begin{table}[!h] 
\makebox[1 \textwidth][c]{ \resizebox{0.5 \textwidth}{!}{   
\begin{tabular}{|p{1.0cm}||p{2.7cm}|p{0.8cm}| }
 \hline
 \multicolumn{3}{|c|}{Sampling - Rejection error} \\ \hline
 $error$ & \centering $N$ & $Pr$ \\ \hline
$1\%$ & $4\cdot 10^{4+\lceil -\log_{10}p\rceil}$ & $0.955$\\
\hline
$2\%$ & $9\cdot 10^{3+\lceil -\log_{10}p\rceil}$ & $0.942$\\
\hline
$3\%$ & $4\cdot 10^{3+\lceil -\log_{10}p\rceil}$ & $0.942$\\
\hline
$4\%$ & $4\cdot 10^{3+\lceil -\log_{10}p\rceil}$ & $0.972$\\
\hline
$5\%$ & $2\cdot 10^{3+\lceil -\log_{10}p\rceil}$ & $0.975$\\
\hline
$6\%$ & $1\cdot 10^{3+\lceil -\log_{10}p\rceil}$ & $0.942$\\
\hline
$7\%$ & $8\cdot 10^{2+\lceil -\log_{10}p\rceil}$ & $0.952$\\
\hline
$8\%$ & $6\cdot 10^{2+\lceil -\log_{10}p\rceil}$ & $0.951$\\
\hline
$9\%$ & $5\cdot 10^{2+\lceil -\log_{10}p\rceil}$ & $0.956$\\
\hline
$10\%$ & $4\cdot 10^{2+\lceil -\log_{10}p\rceil}$ & $0.955$\\
\hline
\end{tabular}
} } 
\caption{\label{tableer} Maximum sampling-rejection method errors with high probability. $N$ is the number of points we have to sample.}
\end{table}

In Table \ref{tableer} we give for several values of errors the $m1,m2$ in order to get probabilities higher than 0.94. Notice that we use the order of $p$ at the exponent. In practice we could estimate $p$ while we are sampling and obtain the right order of $p$ with very high probability.

\section{Intersection with hyperplanes}\label{Shplanes}

This section considers computing the volume of the intersection of a simplex and one or more linear halfspaces. The most general case is to be given two families of parallel hyperplanes and consider all created polytopes.
We assume that the simplex is given in V-representation, i.e.\ as a set of vertices, and the hyperplanes by their equations. 

We can always transform the simplex to be a unit full-dimensional simplex with the origin as one vertex by the transformation of Sect.~\ref{Ssimplex}. The same transform applies to the hyperplanes, and volume ratios as preserved.
\if 0
by shifting them using simplex vertex $V_0$ and use the linear transformation that is defined by matrix $A=[V_1-V_0|\dots |V_{d+1}-V_0]$. If $H_1: c_1x_1\dots c_dx_d<z$ is a hyperplane intersecting a simplex then $H'_1:c_1(x_1+V_{01})+\dots +c_d(x_d+V_{0d})<z\Leftrightarrow c^Tx<z'$ is the shifted one. Then $H:c^TAx<z'$ is the final hyperplane. So in practice is much faster to transform hyperplanes and sample from the unit simplex than sample to the arbitrary simplex.
\fi

\subsection{Single halfspace formula}\label{SSformula}

Surprisingly, there exist an exact, iterative formula (M1) for the volume defined by intersecting a simplex with a hyperplane. A geometric proof is given in \cite{VARSI}, by subdividing the polytope into pyramids and, recursively, to simplices.
We implement a somewhat simpler formula~\cite{Ali73}, which also requires $O(d^2)$ operations.
Let $H = \{(x_1,\dots ,x_d) | \sum_{i=1}^{d} a_ix_i\leq z\}$ be the linear halfspace. 
\begin{enumerate} \item{Compute $u_j=a_j-z$, $j=1,\dots, d$. Label the nonnegative $u_j$ as $Y_1,\dots ,Y_K$ and the negatives as $X_1,\dots ,X_J$.}
{Initialize $A_0=1, A_1=A_2=\cdots =A_K=0$.}
\item For $h=1,2,\dots ,J$ repeat:
$A_k\longleftarrow \dfrac{Y_kA_k-X_hA_{k-1}}{Y_k-X_h}$, for $k=1,2,\dots ,K$.
\end{enumerate}
If $\Delta^d\subset\RR^d$ is the unit simplex then, for $h =J$, $A_K= {\vol(\Delta^d\cap H)} / {\vol(\Delta^d)}$.

Recall, from Section~\ref{Ssimplex}, that sampling uniformly over the simplex can be obtained by drawing exponential random variables. Thus, an alternative formula follows from computing the cumulative distribution of a linear combination of exponential random variables.
In \cite{Mathai}, they propose an exact method to compute the distribution $f$ of such linear combination. It consists in representing $f$ as its moment generating function, analogous to a Laplace transform, simplifying it with a generalized partial-fraction technique of integration, before inverting its terms. However, in double precision, the method showed numerical discrepancies above 20 dimensions and was thus abandoned. However, it has the advantage of being generalizable to nonlinear combinations (see Sect.~\ref{Sconcl}).  

\subsection{Simple polytopes}\label{SimPol}

This section considers simple polytopes defined by a constant number of families of parallel hyperplanes; in our application there are two such families.
The defined polytopes are simple, i.e., all vertices are defined at the intersection of $d$ hyperplanes, assuming that no hyperplane contains any of the simplex vertices and, moreover, two hyperplanes do not intersect on a simplex edge at the same point.

For a simple polytope $P$, the decomposition by Lawrence~\cite{Lawrence} picks $c \in \mathbb{R}^d$, $q\in\mathbb{R}$ such that $c^T x + q$ is not constant along any edge, i.e.\ $c,-c$ do not lie on the normal fan of any edge.
For each vertex $v$, let $A(v)$ be the $d \times d$ matrix whose columns correspond to the equations of hyperplanes through $v$.  Then $A(v)$ is invertible and vector $\gamma(v)$ such that $A(v) \gamma(v) = c$ is well defined up to a permutation. The assumption on $c$ assures no entry vanishes, then
$$
\vol(P) = \frac1{d!} \,
\sum_{v} \frac{ (c^T v + q )^ d }{ |\det A(v) | \; \prod_{i=1}^d \gamma(v)_i }.
$$
The computational complexity is $O (d^3 n)$, where $n$ is the number of vertices.
We set $q=0$ for simplicity in the implementation.
An issue is to choose $c$ so as to avoid that $c^T x + q$ be nearly
constant on some edge, because this would result in very small entries 
in the denominator and numerical issues. 
A theoretical choice is given in \cite{Lawrence}, but its practical importance is very small.
The main drawback of Lawrence's decomposition remains numerical instability when executed with floating point numbers, and high bit complexity, when executed over rational arithmetic. The latter is indispensible for $d>30$ in our applications, because then numerical results become very unstable.

To compute the volume defined by the intersection of a simplex and two arbitrary hyperplanes, we exploit the fact that the simplex is unit in order to compute more effectively the determinants and the solutions of the linear system.
The hardest case is when vertex $v$ is defined by the two arbitrary hyperplanes $H_a, H_b$, the supporting hyperplane $H_{0}:\sum_{i=1}^{d}x_i=1$, and $d-3$ hyperplanes of the form $H_i:x_i=0$. Then, up to row permutations, 
\begin{equation}\label{Ematrix} 
 A(v) \, = \, \begin{bmatrix} 
   -1 & \quad & \quad & 1 & a_1 & b_1 \\
   \quad & \ddots & \quad & \vdots & \vdots & \vdots \\
  \quad & \quad & -1 & 1 & a_{d-3} & b_{d-3} \\  
  \quad & \quad & \quad & 1 & a_{d-2} & b_{d-2} \\
  \quad & \quad & \quad & 1 & a_{d-1} & b_{d-1} \\
  \quad & \quad & \quad & 1 & a_{d} & b_{d} 
\end{bmatrix} ,
\end{equation}

where the $i_j,\, i=a,b$ are the coefficients of the equation of $H_i$ up to permutation.
Then we solve the lowest right $3\times 3$ linear system in $O(1)$ and then the computation of each remaining unknown $\gamma(v)_i$, $i=1,\dots ,d-3$ requires $O(1)$ operations for a total of $O(d).$
The corresponding determinant is computed in $O(1)$. 

\begin{lem}\label{lem1}
Polytopes in H-representation, defined by intersecting the simplex with two arbitrary hyperplanes in $\RR^d$, have $O(d^2)$ vertices, which are computed in $O(1)$ each.
\end{lem}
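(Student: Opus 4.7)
My plan is to classify vertices of $P=\Delta^d\cap H_a^-\cap H_b^-$ by how many of the cutting hyperplanes $H_a,H_b$ are tight at each vertex. Since $P$ is simple, every vertex is defined by exactly $d$ tight facets chosen among the $d+3$ defining half-spaces: the $d$ axis-aligned facets $\{x_i=0\}$, the sum facet $\{\sum_i x_i=1\}$, and the two cutting hyperplanes.

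If neither $H_a$ nor $H_b$ is tight, the vertex is one of the at most $d+1$ surviving simplex vertices. If exactly one of them is tight, say $H_a$, the remaining $d-1$ tight simplex facets carve out an edge of $\Delta^d$; since the simplex has $\binom{d+1}{2}=O(d^2)$ edges, each crossed by $H_a$ in at most one interior point, this case contributes $O(d^2)$ vertices, and symmetrically for $H_b$.

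The subtle case is when both hyperplanes are tight. Then the vertex lies in the codimension-$2$ affine subspace $L=H_a\cap H_b$ together with $d-2$ tight simplex facets that carve out a $2$-face of $\Delta^d$. To bound this contribution, I would project $\Delta^d$ orthogonally onto $L^\perp\cong\RR^2$, obtaining a convex polygon $\Pi$ with at most $d+1$ vertices; the vertices of $P$ with both $H_a,H_b$ tight correspond bijectively to $2$-faces of $\Delta^d$ whose images in $\Pi$ contain the single projected point $\pi(L)$. The main obstacle here is to improve the naive count $\binom{d+1}{3}=O(d^3)$ to $O(d^2)$; I expect a rotating-ray or sweep argument around $\pi(L)$ inside $\Pi$, which has only $O(d)$ edges, to deliver the bound by charging each admissible $2$-face to an adjacent pair of projected simplex vertices.

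For the $O(1)$ per-vertex construction, I would observe that in each of the three cases the vertex has at most three nonzero coordinates, namely those indices $j$ whose coordinate facet $\{x_j=0\}$ is not tight at the vertex. These nonzero coordinates are recovered by solving a linear system of size at most $3\times 3$ obtained by restricting the sum facet and the tight cutting hyperplanes to those indices, which takes constant time per vertex. Concretely, using the block structure exposed in~(\ref{Ematrix}), $A(v)$ splits into a $(d-3)\times(d-3)$ diagonal block equal to $-I_{d-3}$ and a $3\times 3$ lower-right block encoding exactly the active non-coordinate constraints, after which the three nonzero entries of $v$ are the unique solution of the $3\times 3$ system.
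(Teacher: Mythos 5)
Your case split by the number of tight cutting hyperplanes, and your constant-time recovery of each vertex from its at most three nonzero coordinates via the $3\times 3$ block of the matrix in~(\ref{Ematrix}), coincide with the paper's argument; the zero-tight and one-tight cases (at most $d+1$ surviving simplex vertices, and at most one new vertex per simplex edge for each of $H_a,H_b$, hence $O(d^2)$) are complete and correct, as is the $O(1)$ per-vertex computation.

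The doubly-tight case is a genuine gap: you only announce that you ``expect'' a rotating-ray or sweep argument to bring the naive ${d+1 \choose 3}$ count down to $O(d^2)$, and no such argument can exist without further hypotheses. Your own reformulation shows why: writing $\phi(v_0),\dots,\phi(v_d)\in\RR^2$ for the projections of the simplex vertices along $L=H_a\cap H_b$, a $2$-face $v_iv_jv_k$ yields a vertex of $P$ precisely when $\pi(L)$ lies in the triangle $\phi(v_i)\phi(v_j)\phi(v_k)$ (equivalently, the triple is a feasible basis of the system $\lambda\ge 0$, $\sum\lambda_i=1$, $\sum\lambda_i\phi(v_i)=\pi(L)$), and if the $\phi(v_i)$ lie on a circle centred at $\pi(L)$ then a constant fraction of all ${d+1 \choose 3}$ triangles contain that point. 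So this case alone can contribute $\Theta(d^3)$ vertices, and a sweep around $\pi(L)$ cannot charge them to adjacent pairs of hull vertices of $\Pi$, since most of the relevant triangles use non-adjacent points. For comparison, the paper handles this case by identifying each such vertex with an edge of $H_b\cap\Delta^d$ cut by $H_a$, i.e.\ with a pair of $H_b$-crossed simplex edges sharing an apex whose crossing points are separated by $H_a$, and counts these as $d\cdot d/2$; but the number of such pairs is likewise cubic in the worst case, so that count does not close the gap either. To obtain a correct statement you must either impose a position assumption on $H_a,H_b$ relative to the $\phi(v_i)$ that excludes the circular configuration, or weaken the bound to $O(d^3)$ vertices (each still recoverable in $O(1)$ from its support of size at most three, so the downstream Lawrence computation degrades by one factor of $d$).
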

\begin{proof} 
A vertex in the new polytope is of one of 3 types:
(i) It may be a vertex of unit simplex $\Delta$. It suffices to check all simplex vertices against hyperplanes $H_a,H_b$ in total time $O(d)$.
(ii) It may be the intersection of a simplex edge with $H_a$, which is easy to identify and compute by intersecting simplex edges whose vertices lie on different sides of $H_a$, with $H_a$. Each such edge is defined by at least one coordinate hyperplane, so computing the edge intersection with $H_a$ is in $O(1)$. These vertices are checked against $H_b$ in $O(1)$ each, since they contain at most two nonzero coordinates. There are $O(d^2)$ such edges, hence the total complexity is $O(d^2)$.

(iii) It may be defined as $H_a\cap H_b\cap \Delta$, i.e.\ the intersection of $H_a$ with the edges of $H_b\cap \Delta$. Let $B_1,B_2$ be vertices on $H_b\cap \Delta$. Then $B_1$ is defined by the intersection of $H_b$ and an edge $(v_i,v_j)$ of the unit simplex, when $v_i$ and $v_j$ lie on different sides of $H_b$ and $B_2$ by the intersection of $H_b$ and an edge $(v_k,v_m)$. That means that every vertex in $H_b\cap \Delta$ corresponds to a unit simplex edge. Then we have 3 cases:
\begin{enumerate}
\item{$B_1,B_2$ lie on the same side of $H_a$: no vertex is defined.}
\item{If $i\neq k, i\neq m, j\neq k, j\neq m$ there is not an edge between $B_1$ and $B_2$.}
\item{If $B_1,B_2$ correspond to simplex egdes that have a common vertex 
and lie on different sides of $H_a$, then a polytope's vertex is defined, which has at most 3 nonzero coordinates.}
\end{enumerate}
In the worst case $d/2$ simplex vertices lie on the same side of $H_b$ and $d/2$ on the other. Then the polytope's vertices that are defined by $H_a\cap H_b\cap\Delta$ are at most $d\dfrac{d}{2}=O(d^2)$.
\end{proof}

Lawrence's formula requires both H- and V-representation. In our setting, the H-representation is known, but the previous lemma allows us to obtain vertices as well.

\begin{prp}
Let us consider polytopes defined by intersecting the simplex with two arbitrary hyperplanes.  
The total complexity of the Lawrence sign decomposition method, assuming that the H-representation is given, is $O(d^3)$.
\end{prp}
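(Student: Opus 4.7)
The plan is to combine Lemma~\ref{lem1} with a per-vertex cost bound of $O(d)$ for evaluating Lawrence's summand, giving $O(d^2)\cdot O(d)=O(d^3)$ overall.

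By Lemma~\ref{lem1}, the vertex set has cardinality $O(d^2)$ and can be enumerated in $O(d^2)$ total time. A key observation already contained in the proof of Lemma~\ref{lem1} is that every vertex $v$ has at most three nonzero coordinates (one in case (i), two in case (ii), three in case (iii)). Consequently, the inner product $c^Tv$ is computable in $O(1)$, and the numerator $(c^Tv+q)^d$ is obtained in $O(d)$ operations (in fact $O(\log d)$ by repeated squaring).

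Next, I would exploit the sparsity of $A(v)$ to solve $A(v)\gamma(v)=c$ and compute $|\det A(v)|$ cheaply. In the hardest case, the matrix displayed in~(\ref{Ematrix}) has the block form
\[
A(v)=\begin{pmatrix} -I_{d-3} & B \\ 0 & C\end{pmatrix},
\]
where $B\in\RR^{(d-3)\times 3}$ collects the entries of the top $d-3$ rows in the last three columns, and $C\in\RR^{3\times 3}$ is the dense lower-right block. Thus $|\det A(v)|=|\det C|$ is computed in $O(1)$. Splitting $\gamma(v)=(\gamma_1,\gamma_2)$ and $c=(c_1,c_2)$ accordingly, the system decouples: first solve $C\gamma_2=c_2$ in $O(1)$, then recover $\gamma_1=B\gamma_2-c_1$ in $O(d)$ time. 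The product $\prod_{i=1}^d\gamma(v)_i$ needs a further $O(d)$ multiplications. The two easier vertex types (involving one or zero arbitrary-hyperplane columns) have the same block structure with $C$ of size $2\times 2$ or $1\times 1$ respectively, and they satisfy the same $O(d)$ per-vertex bound.

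Summing $O(d)$ work over $O(d^2)$ vertices yields the claimed $O(d^3)$ bound. The main obstacle is keeping the case analysis of $A(v)$ organized across the three vertex types from Lemma~\ref{lem1}; once the block decomposition for the generic matrix~(\ref{Ematrix}) is in place, no new ideas are needed for the other two vertex types, since their $A(v)$ matrices are strictly simpler.
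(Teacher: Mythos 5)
Your proposal is correct and follows essentially the same route as the paper: Lemma~\ref{lem1} gives $O(d^2)$ vertices computable in $O(1)$ each, and the block-triangular structure of the matrix in~(\ref{Ematrix}) lets you get $|\det A(v)|$ from the $3\times 3$ lower-right block in $O(1)$ and solve $A(v)\gamma(v)=c$ by back-substitution in $O(d)$, exactly as the paper does. Your explicit block decomposition and the remark on the sparsity of the vertices merely spell out what the paper leaves implicit.
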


The entire discussion extends to polytopes defined by two families of parallel hyperplanes. The matrices $A(v)$ remain of the same form because each vertex is incident to at most one hyperplane from each family.

\section{Intersection with ellipsoids}\label{Sellipse}

This section considers more general convex bodies, defined as a finite, bounded intersection of linear and nonlinear halfspaces.
For this, we extend the polynomial-time approximation algorithm in \volesti~\cite{VolEsti} so as to handle nonlinear constraints.
Our primary motivation here is computing the volume of the intersection of a simplex with an ellipsoid in general dimension.

\subsection{Random walks}

The method in~\cite{VolEsti} follows the Hit-and-Run algorithm in~\cite{Lovasz99}, and is based on an approximation algorithm in $O^*(d^5)$.
It scales in a few hundred dimensions by integrating certain algorithmic improvements to the original method.
We have to generalize the method because the input is not a polytope but a general convex body, while \volesti \ works for $d$-polytopes.
It suffices to solve two subproblems: Compute the maximum inscribed ball of the convex body a.k.a.\ Chebychev ball, and compute the intersection points of a line that crosses the interior of the convex body $P$ with the boundary of $P$.

The first problem is treated in the next subsection.
For the second one, when the body is the intersection of linear and quadratic halfspaces, it suffices to solve systems of linear or quadratic equations. In our case where $P$ has few input hyperplanes we can optimize that procedure by transforming a base of our polytope to an orthonormal base thus obtaining very simple linear systems.
One heuristic is to first compute the intersection of the line with all hyperplanes and test whether the intersection points lie inside the ellipsoid so as to avoid intersecting the line with the ellipsoid.
Formally, every ray $\ell$ in Coordinate Direction Hit-and-Run is of the form ${p}+\lambda{e}_k$ and parallel to $d-1$ simplex facets. The roots of $\lambda^2+2\lambda p_k+|p|^2-R^2$ define the intersection of a sphere with radius $R$, centered at the origin, and a coordinate direction ray $\ell$.
If $C$ is the matrix of an ellipsoid centered at the origin its intersections with $\ell$ are roots of:
\begin{align*}
& C_{kk} x^2+bx+c=0,\quad
  b=2C_{kk}p_k + 2\sum_{j=k+1}^{d} C_{kj}p_j + 2\sum_{i=0}^{k-1} C_{ik}p_i,\\
& c=\sum_{i=0}^{d} C_{ii}p_{i}^{2} + 2\sum_{j=i+1}^{d} C_{ij}p_i p_j,\quad i=0,\dots, d.
\end{align*}
Computing the roots, and keeping the largest negative and smallest positive $\lambda$ is quite fast.

In our application, there are non-convex bodies defined by the intersection of two parallel hyperplanes and two concentric ellipsoids.  
We thus modify \volesti\ in order to compute the non convex volume. We make two major changes.
First, in ray shooting, we have to check whether one quadratic equation has only complex solutions, which implies the ray does not intersect the ellipsoid.  For $\lambda$, we take the largest negative and the smallest positive root in every step as well.
Second, for the initial interior point, we sample from the unit simplex and when we find a point inside the intersection we stop and use it for initialization. We define an inscribed ball with this center and radius equal to some small $\epsilon>0$. We stop the algorithm when we find the first inscribed ball as described in the next subsection. So we can set $\epsilon$ sufficiently small so it always defines an inscribed ball in practice, but the enclosing ball is enough to run the algorithm and do not stop until we find an inscribed ball. 

The method works fine for $d<35$ using the same walk length and number of points as for the convex case, and has time complexity and accuracy competitive to running \volesti \ on the convex set defined by one ellipsoid. For $d> 35$, the method fails to approximate volume for most of the cases. This should be due to inaccurate rounding bodies and the inscribed ball we define. 
Given these first positive results, various improvements are planned.

\subsection{Chebychev ball}\label{ChebBall}

This section offers methods for computing a ball inside the given convex region. Ideally, this is the largest inscribed ball, aka Chebychev ball, but a smaller ball may suffice.
Computing the Chebychev ball reduces to a linear program when $P$ is a polytope (p. 148 in \cite{Boyd_opt}). For general convex regions, more general methods are proposed.

At the very least, one point must be obtained inside the convex region.
When we do not have the Chebychev ball, an issue is that concentric balls with largest radii will again be entirely contained in the convex region, thus wasting time in the computation.
In practice we use the one interior point as center of an enclosing ball, then reduce the radius until the first inscribed ball.
To decide whether a given ball is inscribed, with high probability, we check whether all boundary points in Hit-and-Run belong to the sphere instead of any other constraint.

We start with some simple approaches.
Let us consider the case of intersecting a simplex with an ellipsoid.
If there are $z_1$ simplex vertices inside the ellipsoid and $z_2$ outside, then we have $(z_2+1)z_1$ vertices on the boundary of the convex intersection. Since $z_1+z_2=d+1$, then $(z_2+1)z_1\geq d+1$ and a new inscribed simplex is defined. In this case we take its largest inscribed ball and start hit and run.
More generally, we sample from the unit simplex until we have $d+1$ points inside our section and then take the largest inscribed ball of this new simplex that is defined by the $d+1$ points.
Another approach is to consider the transformation mapping the ellipsoid to a sphere and apply it both to the simplex and to the ellipsoid. We compute the distance from the sphere's center to the new simplex and compare it with the sphere's radius. 

For a convex body that comes from intersecting a polytope with $k$ balls the problem becomes a Second-Order Cone Program (SOCP) with $k$ cones.
However in our case we need to consider input ellipsoids.
Assume that we transformed the ellipsoid to a ball $B'=\{x'_c+u' \, :\, \|u'\|\leq r'\}$, and applied the same transformation to the simplex to have $a_i x\leq b_i$ for $i\in [d+1], a_i\in \mathbb{R}^d, b_i\in\mathbb{R}$.
The following SOCP computes the maximum ball $B=\{x_c+u\ :\ \|u\|\leq r\}$ in the intersection of the simplex and $B'$:
\begin{align*}
\max\ r,\quad \text{subject to}: a_i^Tx_c + r||a_i|| \leq b_i,\, ||x'_c - x_c|| \leq r'-r .
\end{align*}
There are several ways to solve SOCP's such as to reformulate it to as a semidefinite program or perform a quadratic program relaxation.
Moreover, since in our case we only have a single cone we could utilize special methods as in~\cite{ErIy}.
However, for our case it suffices to use the generic SOCP solver from~\cite{DoCh} as it is very efficient; for a random simplex and ball, it takes $0.06$~sec in $d=100$ and $<20$~sec in $d=1000$, on Matlab using {\tt ecos} and {\tt yalmip} packages.

It is possible to apply the inverse transformation and get an inscribed ellipsoid, which is not necessarily largest possible.
However we can use the maximum inscribed ball in that ellipsoid as an approximation of the Chebychev ball, by taking the center of that ellipsoid and the minimum eigenvalue of its matrix as the radius.

\subsection{Market volatility expressed by ellipsoids}

In our financial application, portfolios are points in the unit $d$-dimensional simplex $\Delta^{d}\subset \RR^{d+1}$ defined as the convex hull of $v_0,\dots,v_d\in\RR^d$, where $v_i$ lies on the $i$-th axis.
The simplex lies in hyperplane $\sum_{i=0}^d \lambda_i=1$.
To model levels of volatility, a family of full-dimensional ellipsoids in $\RR^{d+1}$, centered at the origin, is defined by the covariance matrix $C$ of asset returns.
We wish to compute the volume of intersections of this family with the simplex and, moreover, with a family of hyperplanes on the simplex. Rejection sampling would work in this context, however methods employing random walks require a full-dimensional convex body.
Given a full $(d+1)$-dimensional ellipsoid $G:\lambda^TC\lambda-c\leq 0$ centered at the origin, where $C\in\RR^{(d+1) \times (d+1)}$ is symmetric positive-definite, we compute the equation of the ellipsoid defined $G\cap \Delta^d\subset \RR^d$, by imposing the constraint $\sum_{i=0}^d \lambda_i=1$ by 
transform $m_{cb}$ in expression~(\ref{eq:cb}), thus obtaining:
$$
(x-v_0)^T \left( M^{-T} \, [-1\, I_d] \, C \left[ \begin{array}{c} 1\\ 0_d\end{array} \right] M^{-1} \right) (x-v_0) + A(x-v_0) = c',
$$
where the expression in parenthesis is the matrix defining the new $d$-dimensional ellipsoid in Cartesian coordinates, and $A\in\RR^{d \times d}, c'\in\RR$ are obtained by direct calculation.
Similarly the simplex maps to Cartesian coordinates.  

\if 0
We shift the simplex and the ellipsoid by the vertex $V_d$ and define the matrix $M=[V_1-V_d|\cdots |V_{d-1}-V_d]\in\RR^{d\times d-1}$. If ${x}$ is a Cartesian point in the full dimensional simplex in $\RR^{d-1}$ then the linear transformation $M^{-1}x$ maps it to the shifted facet simplex.

Let $G_1:y^TCy\leq c$ be a full $d$-dimensional ellipsoid centered at the origin, where $C\in\RR^{d\times d}$ is symmetric positive-definite.
Hence, there is a matrix $A\in\RR^{d\times d}$, such that $C=(AA^T)^{-1}$ and maps the set $\{y\in\RR^d:y^TCy\leq c\}$ to the shifted $G_1$ using $R(y)=Ay-V_d$. So all $x^TM^TMx\leq c,\, x\in\RR^{d-1}$ are maped by $R$ to the shifted $G_1$. But the last is an ellipsoid centered at the origin. So if we apply $R^{-1}$, using pseudoinverse $(AM)^{+}$, to the shifted vertices of the facet simplex we get an arbitrary simplex in $\RR^{d-1}$ which intersects with the ellipdoid $\{G_2:x^TM^TMx\leq c\}$. Then we can transform both and get a simplex and a sphere in order to solve the conical problem and get an inscribed ball.  
\fi

\section{Implementation and experiments} \label{Sexper} 

Our implementations are in C++, lie in the public domain\footnote{https://github.com/TolisChal/volume\_approximation}, and are using CGAL and Eigen.
All experiments of the paper have been performed on a personal computer with Intel Pentium G4400 3.30GHz CPU and 16GB RAM.
Times are averaged over 100 runs.
Some resulting tables and figures are given in the Appendix.


We test the following convex bodies: a $d$-simplex intersected with:
(1) two arbitrary halfspaces,
(2) two parallel halfspaces,
(3) an ellipsoid,
(4) two parallel halfspaces and two cocentric ellipsoids (non convex body).

In general, M1 is preferred when available. Method M2 is the fastest and scales easily to 100 dimensions, so it is expected to be useful for larger dimensions. However, for small volumes its accuracy degrades; sampling more points makes it slower than M4. The latter is thus the method of choice for volumes $<1\%$ of the simplex volume, but it is not clear whether it would be fast beyond $d=100$. Method M3 is useful, even for small volumes, but it cannot scale to $d=100$ due to numerical instability; if we opt for exact computing, it becomes too slow.

\subsection{Synthetic data}

The formula M1 is used in all Tables where exact computation is needed between two parallel hyperplanes intersecting the simplex.

Table~\ref{fig:text1} considers the intersection of an arbitrary simplex with two hyperplanes. The vertices of each simplex are randomly chosen uniformly from the surface of a ball with radius 100, using {\tt CGAL} random point generator. All hyperplanes' coefficients are randomly chosen in $[-10,10]$ with {\tt Boost} (mt19937) random generator.
For \volesti\ we do not use the rounding option for the input polytope. This means that skinny polytopes have low accuracy since the random walk mixes slow, cf.\ row 10 of Table~\ref{fig:text1}. On the other hand, M2 is not affected by polytope shape. 
Up to $d= 30$ and for large volume ratio, namely $>1\%$, M2 yields very accurate and fastest results. The last two experiments show that \volesti\ achieves the most accurate approximation when the ration of accepted sampled points is small.

In Table~\ref{fig:text2} we use same runtime for M2 and M4 (analogous numbers of sampled points) and compare their accuracy. We perform two experiments per dimension. For the first, for each dimension we compute the volume between two parallel hyperplanes which is $1\%$ of the simplex volume. For exact volume computation we use (M1). For the second experiment, for each dimension we compute volumes defined by the intersection of 4 hyperplanes which are pairwise parallel with the simplex, which is close to $0.01\%$ of the simplex volume. For exact computation we used {\tt vinci} default method, {\tt rlass}. M2 is fast but inaccurate for small volumes; M4 is most accurate but should not scale beyond $d=100$.

In Table~\ref{fig:text3} we have an arbitrary simplex and two arbitrary hyperplanes that intersect with it. 
We compare our Lawrence implementation in Sect.~\ref{SimPol}, using floating-point and rational computation, with {\tt rlass} and M2. We have two parallel hyperplanes intersect the unit simplex. {\tt vinci} fails to compute the volume for $d>31$.
Our exact computation works even in $d=100$ but becomes very slow.

Table~\ref{fig:text4} compares M2 (s/r) with two variants of M4 for ellipsoid intersection. The only difference for the latter is the way we construct an inscribed ball:
In s/V we implement random sampling until $d+1$ points are found, and in o/V we use SOCP. We see M2 is significantly faster than either variant of M4. All methods yield similar output values.
 
Table~\ref{fig:text5} compares s/r with Hit-and-Run for non-convex bodies, as in Sect.~\ref{Sellipse}. Very small values of Volume means the method failed to approximate the volume.
\if 0
\subsection{Real data}

We analyze real data consisting of regular interval (e.g.\ daily) returns corresponding to indices such as the Dow Jones Europe Stoxx 600\texttrademark (DJ600).
These are points in real space of dimension $d=600$, respectively: $r_i=(r_{i,1},\dots,r_{i,d}) \in \mathbb{R}^{d}$, $i\ge 1$.

We apply the methodology to assets drawn among the DJ~600 constituents\footnote{The data used is daily and ranges from 01/01/1990 to 31/11/2017. It is from Bloomberg\texttrademark.}. 
Since not all assets are tracked for the full period of time, we select the 100 assets with the longest history in the index\footnote{This implies a survivor bias, but we use it to assess the effectiveness of the methodology. One would wish to keep 600 constituents, by replacing exiting by incoming stocks along the sample.}, and juxtapose:
\begin{itemize}
	\item returns and volatility over the same period to detect crises,
	\item returns and past returns to observe any momentum effect,
\end{itemize}

In financial applications, one considers compound returns over periods of $k$ observations, where typically $k=30$ or $k=60$; the latter corresponds to roughly 3 months when observations are daily.
Compound returns are obtained using $k$ observations starting at the $i$-th one where the $j$-th coordinate corresponds to asset $j$ and the component j of the new vector equals:
$$
(1+r_{i,j})(1+r_{i+1,j})\cdots (1+r_{i+k-1,j}) -1, \quad j=1,\dots, d.
$$
This defines the normal vector to a family of parallel hyperplanes, whose equations are fully defined by selecting appropriate constants. 
The second family of parallel hyperplanes is defined similarly by using an adjacent period of $k$ observations.

Volatility is modeled using the shrinkage estimator of the covariance matrix\footnote{Matlab code on \url{http://www.econ.uzh.ch/en/people/faculty/wolf/publications.html}.} \cite{LW04}, as it provides a robust estimate even when the sample size is short with respect to the number of assets.
A covariance matrix $C$ defines a family of ellipsoids centered at the origin $0\in\mathbb{R}^{d}$ whose equations $x^TCx=c$ are fully specified by selecting appropriate constants $c$.

We determine constants defining hyperplanes and ellispoids so that the volume between two consecutive such objects is 1\% of the simplex volume. 
The former are determined by bisection using the formula (M1).
For ellipses $E(x)=c_i$, we look for the $c_i$'s by sampling the simplex, then evaluating $E(x)$ at each point. The values are sorted and the $c_i$ selected so as to define intervals containing 1\% of the values.

The volume between two consecutive hyperplanes and two consecutive ellipsoids defines the density of portfolios whose returns and volatilities lie between the specified constants. We thus get a copula representing the distribution of the portfolios with respect to the portfolios returns and volatilities. 
Fig.~\ref{fig:Returns_variance_relationship} illustrates such copulae, and shows the different relationship between returns and volatility in good (left, dot-com bubbled) and bad (right, bubble burst) times.\footnote{We consider 100 components of DJ~600 with longest history, over 60 days ending at the given date.}
\fi
\subsection{Financial modeling with real data}

When we work with real data in order to build the indicator, we wish to compare the densities of portfolios along the two diagonals. In normal and up-market times, the portfolios with the lowest volatility present the lowest returns and the mass of portfolios should be on the up-diagonal. During crisis the portfolios with the lowest volatility present the highest returns and the mass of portfolios should be on the down-diagonal, see  Fig.~\ref{fig:Returns_variance_relationship}  as illustration.
Thus, setting up- and down-diagonal bands, we define the indicator as the ratio of the down-diagonal band over the up-diagonal band, discarding the intersection of the two. The construction of the indicator is illustrated in Fig.~\ref{fig:Figure_Crisis_Indicator} where the indicator is the ratio of the mass of portfolios in the blue area over the mass of portfolios in the red one.
\begin{figure}[h!]
\includegraphics[width=0.5\textwidth]{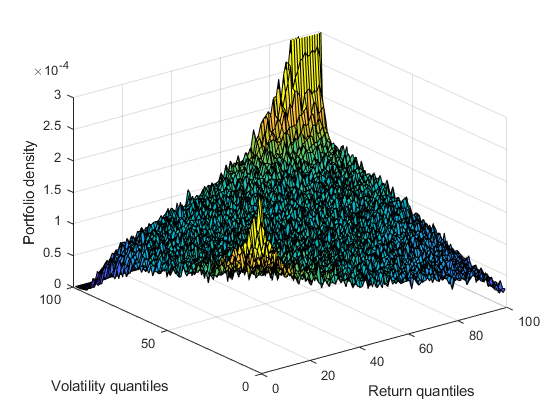} 
\includegraphics[width=0.5\textwidth]{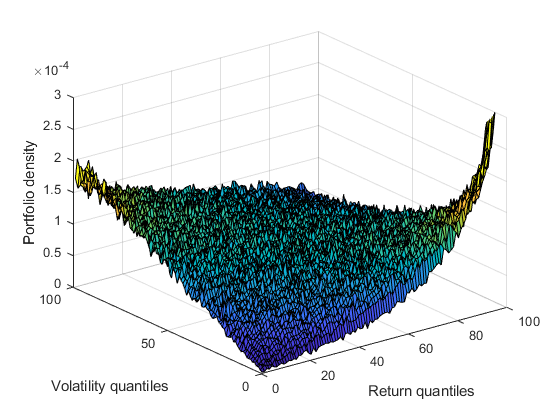} 
	\caption{Returns/variance relationship on the $1^{st}$ September 1999 (left), i.e. during the dot-com bubble, and on the $1^{st}$ September 2000 (right), at the beginning of the bubble burst. Blue= low density of portfolios, yellow=high density of portfolios.}
	\label{fig:Returns_variance_relationship} 
\end{figure}

\begin{figure}[h!]
	\centering
		\includegraphics[width=0.5\textwidth]{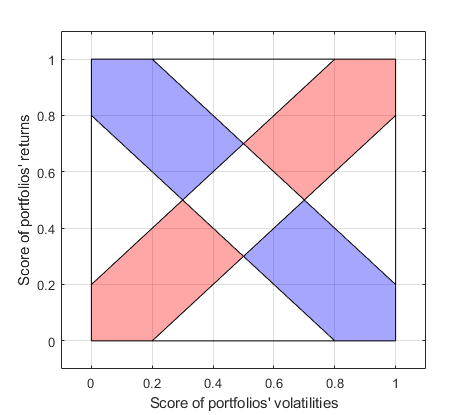}
	\caption{Illustration of the diagonal bands considered to build the indicator.}
	\label{fig:Figure_Crisis_Indicator}
\end{figure}

In the following, the indicator is computed using copulae estimated using the sampling method, drawing 500000 points. Computing the indicator over a rolling window of $k=60$ days and with a band of $\pm 10\%$ with respect to the diagonal, we report in Table \ref{tab:Warnings} all the periods over which the indicator is greater than 1 for more than 60 days. The periods should be more than 60 days to avoid the detection of isolated events whose persistence is only due to the auto-correlation implied by the rolling window. All these periods offer warnings, but only the longest ones correspond to crises. 

We compare these results with the database for financial crises in European countries proposed in ~\cite{ESRB17}. The first crisis (from May 1990 to Dec. 1990) corresponds to the early 90's recession, the second one (from May 2000 to May 2001) to the dot-com bubble burst, the third one (from Oct. 2001 to Apr. 2002) to the stock market downturn of 2002, the fourth one (from Nov. 2005 to Apr. 2006) is not listed and it is either a false signal or it might be due to a bias in the companies selected in the sample, and the fifth one (from Dec. 2007 to Aug. 2008) to the sub-prime crisis.

Regarding the momentum effect, i.e.\ the effect of the compound returns of the last 60 days on the following 60-day compound returns, we report the indicator in Fig~\ref{fig:Momentum}. We observe that there were only 10 events of lasting momentum effect, mostly around the 1998-2004 period. We remark that they nearly never overlap with the crisis events, with the exception of the end of 2011. To the authors' knowledge, this result is new in finance.

\section{Conclusion and Future work}\label{Sconcl}

Since runtimes are very reasonable, we plan to extend our study to larger subsets of assets of DJ~600 and eventually the whole index in $d=600$.
Another extension is to consider polytopes defined by intersections of both families of parallel hyperplanes and the family of ellipsoids, thus creating 3-D diagrams of dependencies, which have never been studied in finance: one difficulty is to model the outcome since visualization becomes intricate.

Random sampling follows a Monte Carlo (MC) approach by relying on C/C++ functions such as {\tt random} which implement pseudorandom generators.
We are experimenting with quasi-MC generators which require fewer points to simulate the uniform distribution. Our preliminary experiments indicate that this may yield a speedup of about~2.
An obvious enhancement is to parallelize our algorithms, which seems straightforward. Then results can be obtained for larger classes of assets such as the entire DJ~600.

One challenge is to extend the volume formula to the intersection with a ellipsoid.
In \cite{MP13}, they propose a method to approximate the distribution $f$ of quadratic forms in gamma random variables which is a similar problem to that in~\cite{Mathai} (see Sect.~\ref{Shplanes}). It consists in fitting $f$ with a generalized gamma distribution by matching its first 3 moments with those of $f$ and to adjust the distribution with a polynomial in order to fit the higher moments. To get an approximation with a polynomial of degree $k$, the method requires the first $2k$ moments.
In the case of a quadratic form in $d$ random variables, the moment of order $m$ is obtained by a sum over all the partitions of $m$ into $d^2$ terms. The number of partitions makes the computation of moments challenging even for $d \geq 5$.

\bibliography{p19-cales}
\newpage
\appendix

\section{Experiments} 

\begin{table}[!h]
\makebox[1 \textwidth][c]{       
\resizebox{1.3 \textwidth}{!}{   
\begin{tabular}{ |p{0.3cm}||p{0.8cm}|p{0.8cm}|p{0.25cm}||p{1.2cm}|p{1.7cm}|p{1.7cm}||p{1.7cm}|p{1.1cm}|p{1.9cm}|p{0.9cm}||p{0.6cm}|p{1.0cm}|p{0.7cm}| }
 \hline
 \multicolumn{14}{|c|}{Experimental results for arbitrary simplex and two arbitrary hyperplanes} \\
 \hline
 $d$ & $k$ & $m$ & n & $R/r$ & Simplex Vol. & {\tt Vinci} Vol. & s/r Vol. &
s/r Error & \volesti\ Vol. & \volesti\ Error & s/r Time & \volesti\ Time & {\tt Vinci} Time
\\
\hline
5 & $2\cdot 10^5$ & 1464 & 4 & 39.3877 & 225859 & 1638.14  & 1653.29 & 0.0092 & 1551.961  & 0.0526 & 0.076 & 1.189  & 0.0\\
\hline
5 & $2\cdot 10^5$ & 8269 & 5 & 240.261 & 31545.7 & 1287.54  & 1304.26 & 0.0130 & 1104.214  & 0.1423 & 0.072 &  2.474 & 0.0\\
\hline
10 & $3\cdot 10^5$ & 111018 & 7 & 31.1786 & 1.14352e+09 & 4.22648e+08  & 4.2317e+08 & 0.0012 & 4.399476e+08  & 0.0409 & 0.156 & 8.290  & 0.0\\
\hline
10 & $3\cdot 10^5$ & 16279 & 9 & 752.594 & 2.21485e+07 &  1.20556e+06 & 1.20185e+06 & 0.0031 &  0.023537e+06 & 0.9805 & 0.164 & 19.980  & 0.0\\
\hline
15 & $3\cdot 10^5$ & 1695 & 11 & 112.756 & 2.87936e+10 &  1.62617e+08 & 1.62684e+08 & 0.0004 &  1.284843e+08 & 0.2099 & 0.204 &  43.547 & 0.0\\
\hline
15 & $3\cdot 10^5$ & 168639 & 10 & 51.9497 & 1.8289e+11 &  1.02984e+11 & 1.02808e+11 & 0.0017 & 1.018419e+11  & 0.0111 & 0.224 & 31.848  & 0.0\\
\hline
20 & $4\cdot 10^5$ & 52657 & 17 & 50.351 & 2.47765e+17 & 3.24630e+16 & 3.26163e+16 & 0.0047 & 3.201464e+16 & 0.0138 & 0.416 & 135.685 & 0.0\\
\hline
20 & $4\cdot 10^5$ & 13952 & 17 & 140.094 & 6.76692e+15 & 2.38561e+14 & 2.3603e+14 & 0.0106 & 2.334992e+14 & 0.0212 & 0.42 & 181.058 & 0.0\\
\hline
25 & $4\cdot 10^5$ & 4982 & 23 & 135.804 & 1.37457e+18 & 1.70146e+16 & 1.71202e+16 & 0.0062 & 1.119995e+16 & 0.3417 & 0.52 & 333.052 & 0.0\\
\hline
25 & $4\cdot 10^5$ & 3809 & 25 & 89.8112 & 4.17323e+18 & 4.03833e+16 & 3.97396e+16 & 0.0159 & 5.017313e+18 & 123.2 & 0.508 & 384.346 & 0.0\\
\hline
30 & $4\cdot 10^5$ & 118304 & 22 & 4164.1 & 1.28638e+17 & 4.12910e+16  & 4.10773e+16 & 0.0052 & 5.02297e+16 & 0.2165 & 0.64 & 863.056 & 11.4\\
\hline
30 & $4\cdot 10^5$ & 27523 & 24 & 177.613 & 4.08094e+18 & 2.80038e+17  & 2.80799e+17 & 0.0027 & 1.891857e+17 & 0.3244 & 0.616 & 622.995 & 7.3\\
\hline\hline
10 & $3\cdot 10^5$ & 1151 & 10 & 61.3936 & 2.99231e+08 & 1.17756e+06  & 1.14805e+06 & 0.0251 & 1.185146e+06  & 0.0064 & 0.152 & 10.367  & 0.0\\
\hline
18 & $4\cdot 10^5$ & 1318 & 16 & 57.0641 & 8.58015e+11 & 2.96758e+09 & 2.82716e+09 & 0.0473 & 2.908083e+09 & 0.0200 & 0.376 & 93.7450 & 0.0\\
\hline
\end{tabular} 
} }
\caption{\label{fig:text1} $k$ is the number of points sampled in the unit simplex, $k=10^5\log d$, $m$ the number of points in the intersection, $n$ the number of vertices in the intersection. $R/r$ is the ratio of radii of the smallest enclosing over the largest inscribed ball of the simplex;
s/r is sampling with rejection; Error denotes relative error $(V-v)/V$ of computed value $v$ over exact volume $V$. Time is in seconds.
}
\end{table}

\quad
\begin{table}[!h]
\makebox[1 \textwidth][c]{       
\resizebox{1.3 \textwidth}{!}{   
\begin{tabular}{ |p{0.3cm}|p{0.8cm}|p{1.0cm}|p{1.5cm}|p{0.8cm}|p{1.0cm}|p{0.3cm}|p{0.4cm}|p{1.5cm}|p{1.0cm}|p{1.5cm}|p{0.8cm}| }
 \hline
 \multicolumn{12}{|c|}{Experimental results for rejection and \volesti.} \\
 \hline
 $d$ & $k$ & $m$ & s/r Vol & s/r time & $N$ & $W$ & $\epsilon$ & \volesti & Time \volesti & Exact Vol & Exact Time \\
 \hline
15 & 3$\cdot 10^7$ & 300345 & 7.66e-15 & 14.716 & 101551 & 11 & 0.4 & 7.52e-15 & 20.86 & 7.65e-15 & 0.0 \\
\hline
15 & 3$\cdot 10^7$ & 744 & 1.90e-17 & 14.796 & 101551 & 11 & 0.4 & 2.15e-17 & 21.49 & 2.01e-15 & 0.0 \\
\hline
20 & 3$\cdot 10^7$ & 299842 & 4.11e-21 & 23.532 & 66571 & 12 & 0.6 & 4.44e-21 & 36.17 & 4.11e-21 & 0.0 \\
\hline
20 & 3$\cdot 10^7$ & 2040 & 2.80e-23 & 23.688 & 66571 & 12 & 0.6 & 2.72e-23 & 34.88 & 2.74e-23 & 0.1 \\
\hline
25 & 3$\cdot 10^7$ & 299976 & 6.44e-28 & 30.74 & 50294 & 12 & 0.8 & 5.81e-28 & 34.03 & 6.45e-28 & 0.0 \\
\hline
25 & 3$\cdot 10^7$ & 980 & 2.10e-30 & 30.664 & 65691 & 12 & 0.7 & 2.00e-30 & 46.03 & 1.985e-30 & 0.1 \\
\hline
30 & 4$\cdot 10^7$ & 400395 & 3.77e-35 & 51.104 & 50388 & 13 & 0.9 & 3.42e-35 & 48.71 & 3.77e-35 & 0.0 \\
\hline
30 & 4$\cdot 10^7$ & 4769 & 4.49e-37 & 60.32 & 63772 & 13 & 0.8 & 4.52e-37 & 63.91 & 4.56e-37 & 3.2 \\
\hline
\end{tabular}
} }
\caption{\label{fig:text2} $k$ is the number of points sample from the unit simplex, $k=10^7\log d$, $m$ is the number of points in the intersection; s/r is sampling with rejection (M2); $N$ is the number of points generated by \volesti\ per step, $W$ is the walk length; $N=\dfrac{1}{\epsilon^2}400d \log d$. Time is in seconds.
}
\end{table}

\quad
\begin{table}[!h]
\makebox[1 \textwidth][c]{       
\resizebox{0.9 \textwidth}{!}{   
\begin{tabular}{ |p{0.3cm}|p{0.8cm}|p{0.8cm}|p{1.5cm}|p{0.8cm}|p{1.5cm}|p{0.8cm}|p{1.5cm}|p{0.8cm}|p{1.5cm}|p{0.8cm}|p{0.8cm}|  }
 \hline
 \multicolumn{12}{|c|}{Experimental results for Lawrence and rejection methods.} \\
 \hline
 $d$ & $k$ & $m$ & s/r Vol & s/r Time & ex/Law Vol & ex/Law time & fl/Law Vol & fl/Law Time & Vinci Vol & Vinci Time & per. Vol  \\
 \hline
2 & $10^5$ & 969 & 0.0049 & 0.036 & 0.005 & 0.0 & 0.0005 & 0.0 & 0.005 & 0.0 & $1\%$ \\
\hline
5 & 2$\cdot 10^5$ & 2034 & 8.475e-05 & 0.056 & 8.33e-05 & 0.0 & 8.33e-05 & 0.0 & 8.33e-05 & 0.0 & $1\%$ \\
\hline
5 & 2$\cdot 10^6$ & 19967 & 8.320e-05 & 0.492 & 8.33e-05 & 0.0 & 8.33e-05 & 0.0 & 8.33e-05 & 0.0 & $1\%$ \\
\hline
10 & 3$\cdot 10^5$ & 2952 & 2.711e-09 & 0.136 & 2.76e-09 & 0.0 & 2.76e-09 & 0.0 & 2.76e-09 & 0.0 & $1\%$ \\
\hline
10 & 3$\cdot 10^6$ & 2986 & 2.743e-09 & 1.132 & 2.76e-10 & 0.0 & 2.76e-10 & 0.0 & 2.76e-10 & 0.0 & $0.1\%$ \\
\hline
15 & 3$\cdot 10^5$ & 2991 & 7.624e-15 & 0.156 & 7.64e-15 & 0.02 & 7.64e-15 & 0.0 & 7.64e-15 & 0.0 & $1\%$ \\
\hline
20 & 4$\cdot 10^5$ & 4096 & 4.209e-21 & 0.332 & 4.11e-21 & 0.052 & 4.11e-21 & 0.0 & 4.11e-21 & 0.0 & $1\%$ \\
\hline
20 & 4$\cdot 10^6$ & 39800 & 4.09e-21 & 3.204 & 4.11e-21 & 0.052 & 4.11e-21 & 0.0 & 4.11e-21 & 0.0 & $1\%$ \\
\hline
20 & 4$\cdot 10^6$ & 3894 & 4.001e-22 & 3.14 & 4.11e-22 & 0.02 & 4.11e-22 & 0.0 & 4.11e-22 & 0.0 & $0.1\%$ \\
\hline
25 & 4$\cdot 10^5$ & 4049 & 6.526e-28 & 0.416 & 6.45e-28 & 0.076 & 6.45e-28 & 0.0 & 6.45e-28 & 0.0 & $1\%$ \\
\hline
25 & 4$\cdot 10^6$ & 39858 & 6.424e-28 & 4.108 & 6.45e-28 & 0.076 & 6.45e-28 & 0.0 & 6.45e-28 & 0.0 & $1\%$ \\
\hline
30 & 4$\cdot 10^5$ & 3986 & 3.757e-35 & 0.52 & 3.77e-35 & 0.12 & 2.37e-35 & 0.0 & 3.77e-35 & 0.0 & $1\%$ \\
\hline
30 & 4$\cdot 10^6$ & 40155 & 3.785e-35 & 4.808 & 3.77e-35 & 0.12 & 3.77e-35 & 0.0 & 3.77e-35 & 0.0 & $1\%$ \\
\hline
30 & 4$\cdot 10^6$ & 3979 & 3.750e-36 & 4.96 & 3.77e-36 & 0.08 & 3.77e-35 & 0.0 & 3.77e-36 & 0.0 & $0.1\%$ \\
\hline
35 & 4$\cdot 10^5$ & 4077 & 9.864e-43 & 0.588 & 9.67e-43 & 0.184 & 9.68e-43 & 0.004 & ---- & -- & $1\%$ \\
\hline
35 & 4$\cdot 10^6$ & 40155 & 9.696e-43 & 5.852 & 9.67e-43 & 0.184 & 6.22e-42 & 0.0 & ---- & -- & $1\%$ \\
\hline
40 & 5$\cdot 10^5$ & 4977 & 1.220e-50 & 0.864 & 1.226e-50 & 0.34 & 1.06e-50 & 0.0 & ---- & -- & $1\%$ \\
\hline
40 & 5$\cdot 10^6$ & 50074 & 1.227e-50 & 8.56 & 1.226e-50 & 0.34 & 1.23e-50 & 0.0 & ---- & -- & $1\%$ \\
\hline
40 & 5$\cdot 10^6$ & 4923 & 1.207e-51 & 8.464 & 1.226e-51 & 0.344 & -1.38e-49 & 0.0 & ---- & -- & $0.1\%$ \\
\hline
50 & 5$\cdot 10^5$ & 5003 & 3.290e-67 & 1.088 & 3.28e-67 & 1.276 & 3.29e-67 & 0.0 & ---- & -- & $1\%$ \\
\hline
50 & 5$\cdot 10^6$ & 49923 & 3.283e-67 & 11.0 & 3.28e-67 & 1.276 & 2.99e-67 & 0.0 & ---- & -- & $1\%$ \\
\hline
50 & 5$\cdot 10^6$ & 5011 & 3.295e-68 & 11.068 & 3.28e-68 & 0.924 & 3.16e-68 & 0.0 & ---- & -- & $0.1\%$ \\
\hline
60 & 5$\cdot 10^5$ & 5093 & 1.224e-84 & 1.356 & 1.20e-84 & 2.6 & 3.59e-84 & 0.0 & ---- & -- & $1\%$ \\
\hline
60 & 5$\cdot 10^6$ & 50122 & 1.204e-84 & 13.5 & 1.20e-84 & 2.6 & -4.20e-83 & 0.0 & ---- & -- & $1\%$ \\
\hline
60 & 5$\cdot 10^6$ & 4897 & 1.177e-85 & 13.512 & 1.20e-84 & 2172 & -2.27e-80 & 0.0 & ---- & -- & $0.1\%$ \\
\hline
70 & 6$\cdot 10^5$ & 6069 & 8.444e-103 & 1.988 & 8.348e-103 & 5.776 & -1.85e-95 & 0.0 & ---- & -- & $1\%$ \\
\hline
70 & 6$\cdot 10^6$ & 59911 & 8.336e-103 & 19.436 & 8.348e-103 & 5.776 & -8.78e-97 & 0.0 & ---- & -- & $1\%$ \\
\hline
70 & 6$\cdot 10^6$ & 6105 & 8.494e-104 & 19.512 & 8.348e-104 & 5.048 & 9.37e-99 & 0.0 & ---- & -- & $0.1\%$ \\
\hline
70 & $\cdot 10^7$ & 10125 & 8.453e-104 & 32.208 & 8.348e-104 & 5.776 & -1.28e-95 & 0.0 & ---- & -- & $0.1\%$ \\
\hline
80 & 6$\cdot 10^5$ & 6059 & 1.410e-121 & 2,24 & 1.397e-121 & 11.564 & 2.33e-91 & 0.0 & ---- & -- & $1\%$ \\
\hline
80 & 6$\cdot 10^6$ & 59991 & 1.397e-121 & 22.576 & 1.397e-121 & 11.564 & ---- & -- & ---- & -- & $1\%$ \\
\hline
80 & 6$\cdot 10^6$ & 5965 & 1.389e-122 & 22.424 & 1.397e-121 & 11.272 & ---- & -- & ---- & -- & $0.1\%$ \\
\hline
90 & 6$\cdot 10^5$ & 6045 & 6.781e-141 & 2.492 & 6.73e-141 & 25.036 & ---- & -- & ---- & -- & $1\%$ \\
\hline
90 & 6$\cdot 10^6$ & 59873 & 6.717e-141 & 24.384 & 6.73e-141 & 25.036 & ---- & -- & ---- & -- & $1\%$ \\
\hline
90 & 6$\cdot 10^6$ & 6083 & 6.823e-142 & 24.416 & 6.73e-142 & 20.764 & ---- & -- & ---- & -- & $0.1\%$ \\
\hline
90 & $\cdot 10^7$ & 10036 & 6.755e-142 & 41.232 & 6.73e-142 & 25.3 & ---- & -- & ---- & -- & $0.1\%$ \\
\hline
100 & 6$\cdot 10^5$ & 6020 & 1.075e-160 & 2.696 & 1.072e-160 & 41.56 & ---- & -- & ---- & -- & $1\%$ \\
\hline
100 & 6$\cdot 10^6$ & 60190 & 1.075e-160 & 27.096 & 1.072e-160 & 41.56 & ---- & -- & ---- & -- & $1\%$ \\
\hline
100 & 6$\cdot 10^6$ & 6034 & 1.077e-161 & 27.472 & 1.072e-161 & 37.352 & ---- & -- & ---- & -- & $0.1\%$ \\
\hline
100 & $\cdot 10^7$ & 9979 & 1.069e-161 & 45.168 & 1.072e-161 & 33.612 & ---- & -- & ---- & -- & $0.1\%$ \\
\hline
\end{tabular}
} } 
\caption{\label{fig:text3} $k$ is the number of points sample from the unit simplex, $k=10^x\log d$, with $x=max\{5, 4+\lceil -\log_{10}(p)\rceil \}$, where $p$ is the percentage of the unit simplex volume of the defined polytope, $m$ is the number of points in the intersection and last column is $p$. Time is in seconds.}
\end{table}

\quad
\begin{table}[!h] 
\makebox[1 \textwidth][c]{ \resizebox{1.3 \textwidth}{!}{   
\begin{tabular}{|p{0.3cm}||p{0.8cm}|p{0.8cm}|p{1.8cm}|p{1.0cm}||p{0.8cm}|p{0.8cm}|p{0.3cm}|p{1.8cm}|p{1.0cm}|p{1.8cm}|p{1.0cm}|  }
 \hline
 \multicolumn{12}{|c|}{Experimental results for the unit simplex and ellipsoid intersection.} \\ \hline
 $d$ & $k$ & $m$ & s/r Vol. & s/r Time & $N$ & $W$ & $\epsilon$ &
  s/V Vol. & s/V Time & o/V Vol. & o/V Time \\ \hline
3 & $10^5$ & 1318 & 0.0804667 & 0.004 & 14648 & 10 & 0.3 & 0.0792319 & 0.592 & 0.0798146 & 0.564\\
\hline
6 & $10^5$ & 7668 & 0.001065 & 0.004 & 47780 & 10 & 0.3 & 0.00107003 & 14.172 & 0.00105103 & 13.412\\
\hline
8 & $10^5$ & 8798 & 2.18204e-05 & 0.012 & 73935 & 10 & 0.3 & 2.18847e-05 & 48.672 & 2.22077e-05 & 55.324 \\
\hline
15 & 2$\cdot 10^5$ & 19827 & 7.58102e-13 & 0.02 & 64993 & 11 & 0.5 & 7.68531e-13 & 96.888 & 7.46954e-13 & 105.06 \\
\hline
20 & 3$\cdot 10^5$ & 29951 & 4.1036e-19 & 0.036 & 66571 & 12 & 0.5 & 3.93709e-19 & 178.54 & 3.97954e-19 & 170.476 \\
\hline
25 & 3$\cdot 10^5$ & 39987 & 6.44486e-26 & 0.056 & 89413 & 12 & 0.6 & 6.4879e-26 & 457.196 & 6.51637e-26 & 442.68 \\
\hline
30 & 3$\cdot 10^5$ & 39987 & 3.76754e-33 & 0.056 & 63772 & 14 & 0.8 & 3.92896e-33 & 311.772 & 3.56866e-33 & 311.872 \\
\hline
40 & 4$\cdot 10^5$ & 39974 & 1.22559e-48 & 0.096 & 40987 & 14 & 1.2 & 1.21068e-48 & 253.38 & 1.30804e-48 & 242.172 \\
\hline
40 & 4$\cdot 10^5$ & 39999 & 1.22559e-48 & 0.096 & 59022 & 14 & 1.0 & 1.28713e-48 & 436.976 & 1.26529e-48 & 448.472 \\
\hline
\end{tabular}
} } 
\caption{\label{fig:text4} $k$ is the number of points sampled in the simplex, of which $m$ lie in the intersection; $N$ is the number of points generated by \volesti\ per step, $W$ is the walk length; $N=\dfrac{1}{\epsilon^2}400d \log d$. Time is in seconds.}
\end{table} 

\quad
\begin{table}[!h] 
\makebox[1 \textwidth][c]{ \resizebox{1.0 \textwidth}{!}{   
\begin{tabular}{|p{0.3cm}||p{0.8cm}|p{0.8cm}|p{1.8cm}|p{1.0cm}||p{0.8cm}|p{0.8cm}|p{0.3cm}|p{1.8cm}|p{1.0cm}|  }
 \hline
 \multicolumn{10}{|c|}{Experimental results for non convex bodies.} \\ \hline
 $d$ & $k$ & $m$ & s/r Vol. & s/r Time & $N$ & $W$ & $\epsilon$ &
  s/V Vol. & s/V Time \\ \hline
3 & $5\cdot 10^5$ & 6384 & 0.00213 & 0.172 & 14648 & 10 & 0.3 & 0.00174 & 2.028 \\
\hline
6 & $5\cdot 10^5$ & 43210 & 0.000120 & 0.22 & 47780 & 10 & 0.3 & 0.000120 & 22.036 \\
\hline
8 & $5\cdot 10^5$ & 72915 & 3.616e-06 & 0.26 & 73935 & 10 & 0.3 & 3.633e-06 & 114.596 \\
\hline
15 & 5$\cdot 10^5$ & 38012 & 5.814e-14 & 0.448 & 64993 & 11 & 0.5 & 5.834e-14 & 139.908 \\
\hline
15 & 5$\cdot 10^5$ & 41824 & 6.044e-14 & 0.476 & 64993 & 12 & 0.5 & 8.109e-14 & 240.74 \\
\hline
20 & 5$\cdot 10^5$ & 31824 & 2.616e-20 & 0.644 & 95863 & 12 & 0.5 & 2.642e-20 & 1016.15 \\
\hline
20 & 5$\cdot 10^5$ & 36273 & 2.981e-20 & 0.620 & 66571 & 12 & 0.6 & 2.895e-20 & 323.536 \\
\hline
25 & 5$\cdot 10^5$ & 27650 & 3.565e-27 & 0.86 & 89413 & 12 & 0.6 & 3.787e-27 & 999.352 \\
\hline
25 & 5$\cdot 10^5$ & 27055 & 3.488e-27 & 0.82 & 65691 & 12 & 0.7 & 3.301e-27 & 586.496 \\
\hline
30 & 5$\cdot 10^5$ & 26451 & 1.994e-34 & 1.032 & 83294 & 13 & 0.7 & 2.171e-34 & 1051.19 \\
\hline
30 & 5$\cdot 10^5$ & 26265 & 1.980e-34 & 1.072 & 83294 & 13 & 0.7 & 2.179e-34 & 1005.43 \\
\hline
35 & 5$\cdot 10^5$ & 2158 & 4.176e-43 & 1.196 & 49774 & 14 & 1.0 & 2.904e-44 & 630.908 \\
\hline
35 & 5$\cdot 10^5$ & 1115 & 2.158e-43 & 1.348 & 61450 & 13 & 0.9 & 1.198e-166 & 1417.01 \\
\hline
35 & 5$\cdot 10^5$ & 10160 & 1.966e-42 & 1.292 & 61450 & 13 & 0.9 & 1.061e-42 & 810.248 \\
\hline
40 & 5$\cdot 10^5$ & 8753 & 1.22559e-48 & 1.36 & 72866 & 13 & 0.9 & 2.087e-192 & 1873.56 \\
\hline
\end{tabular}
} } 
\caption{\label{fig:text5} $k$ is the number of points sampled in the simplex, set to constant $k=5\cdot 10^5$, of which $m$ lie in the intersection; $N$ is the number of points generated by \volesti\ per step, $W$ is the walk length. We set $N=\dfrac{1}{\epsilon^2}400d \log d$. Time is in seconds.}
\end{table}

\quad
\newpage
\section{Applications} 

\begin{figure}[h!]
		\includegraphics[width=1.2\textwidth]{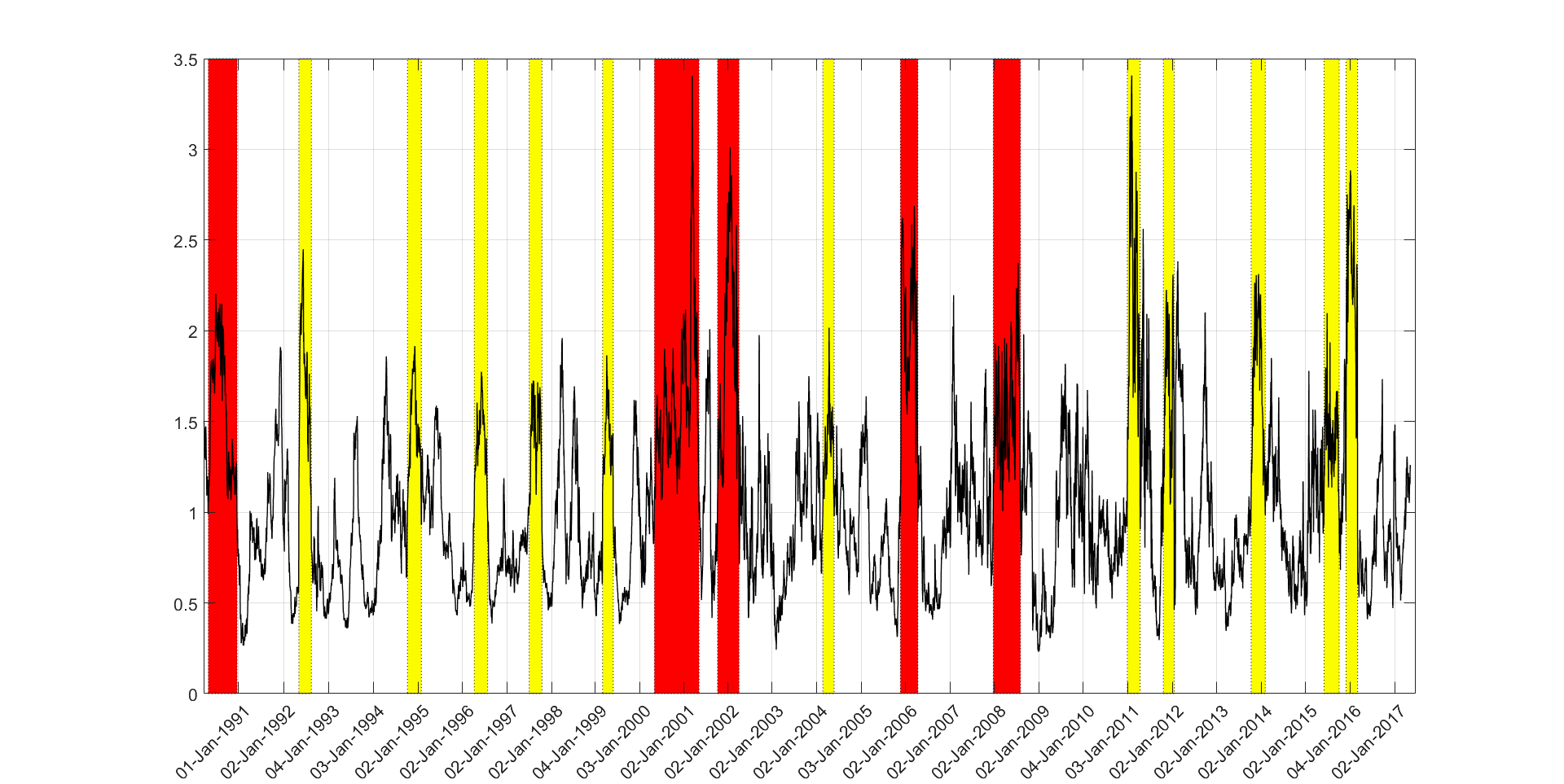}
	\caption{Representation of the periods over which the indicator is greater than one for 61-100 days (yellow) and over 100 days (red)}
	\label{fig:WarningsCrises}
\end{figure}

\begin{table}[h!]
	\centering
	\begin{tabular}{llc}
	Start date & End date & Duration (days)\\
		\hline
		02-May-1990 & 20-Dec-1990 & 166 \\
		06-May-1992 & 14-Aug-1992 & 72 \\
		06-Oct-1994 & 27-Jan-1995 & 80 \\
		08-Apr-1996 & 24-Jul-1996 & 77 \\
		01-Jul-1997 & 13-Oct-1997 & 74 \\
		03-Mar-1999 & 01-Jun-1999 & 61 \\
		04-May-2000 & 09-May-2001 & 258 \\
		05-Oct-2001 & 05-Apr-2002 & 124 \\
		25-Feb-2004 & 28-May-2004 & 65 \\
		18-Nov-2005 & 11-Apr-2006 & 101 \\
		20-Dec-2007 & 04-Aug-2008 & 157 \\
		28-Dec-2010 & 12-Apr-2011 & 75 \\
		18-Oct-2011 & 16-Jan-2012 & 63 \\
		08-Oct-2013 & 04-Feb-2014 & 82 \\
		04-Jun-2015 & 05-Oct-2015 & 87 \\
		30-Nov-2015 & 03-Mar-2016 & 66 \\ 
	\end{tabular}
	\caption{All periods over which the return/volatility indicator is greater than one for more than 60 days.}
	\label{tab:Warnings}
\end{table}

\begin{figure}[h!]
		\includegraphics[width=1.2\textwidth]{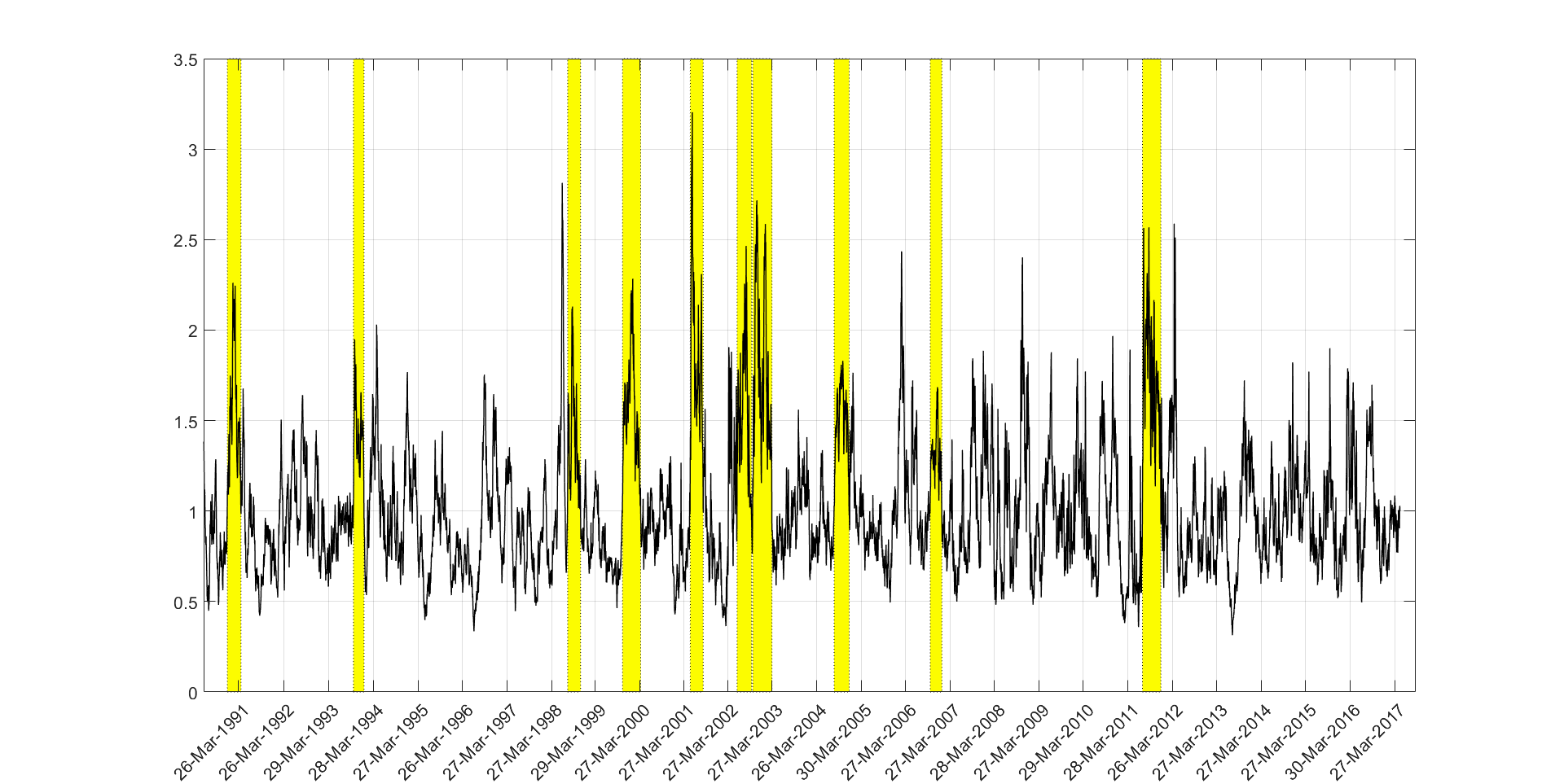}
	\caption{Representation of the periods over which the indicator is greater than one for over 60 days (yellow)}
	\label{fig:Momentum}
\end{figure}

\begin{table}[h!]
	\centering
	\begin{tabular}{llc}
	Start date & End date & Duration (days)\\
		\hline
		26-Dec-1990 & 16-Apr-1991 & 79 \\
		18-Oct-1993 & 11-Jan-1994 & 61 \\
		11-Aug-1998 & 24-Nov-1998 & 75 \\
		08-Nov-1999 & 04-Apr-2000 & 105 \\ 
		22-May-2001 & 04-Sep-2001 & 75 \\
		14-Jun-2002 & 09-Oct-2002 & 83 \\
		18-Oct-2002 & 27-Mar-2003 & 111 \\ 
		20-Aug-2004 & 21-Dec-2004 & 87 \\
		13-Oct-2006 & 19-Jan-2007 & 67 \\
		26-Jul-2011 & 21-Dec-2011 & 106 \\ 
	\end{tabular}
\caption{All periods over which the momentum indicator is greater than one for more than 60 days.\label{tab:Momentum}}
\end{table}

\end{document}